%
%
%CYCLIC STRUCTURES OF CLIFFORDIAN SUPERGROUPS AND PARTICLE REPRESENTATIONS OF SPIN_+(1,3)
%
%
%
%V.V. Varlamov
%e-mail: root@varlamov.kemerovo.su
%September 2, 2010
\documentclass[12pt]{article}
\usepackage{amsmath,amsthm}
\usepackage{eufrak}
\usepackage{pb-diagram,lamsarrow,pb-lams}

%\makeindex

%\pagestyle{plain}
%\newfont{\bb}{msbm10}
%\def\C{\hbox{\bb C}}
\topmargin=-2.5cm \oddsidemargin=-0.75cm \evensidemargin=-0.75cm
\textwidth 17.5cm \textheight 25cm
%\textheight 19cm
%\textwidth 13.5cm
%\pagestyle{bottomnumber}
%\renewcommand{\baselinestretch}{1.5}
 % 1.0 % 0.6
%\DeclareOption{reqno}
\DeclareMathAlphabet{\bb}{U}{msb}{m}{n} \gdef\C{\bb C} \gdef\dZ{\bb
Z}   \gdef\dS{\bb S} \gdef\R{\bb R}
\gdef\K{\bb K} \gdef\BH{\bb H} \gdef\F{\bb F} \gdef\dO{\bb O}

\DeclareMathOperator{\End}{End} \DeclareMathOperator{\spin}{{\bf
Spin}} 
\DeclareMathOperator{\fD}{\mathfrak{D}}

\DeclareMathOperator{\Sym}{Sym} 
 
 \DeclareMathOperator{\Mat}{Mat}
 
 \DeclareMathOperator{\SL}{SL}
\DeclareMathOperator{\SO}{SO}\DeclareMathOperator{\SU}{SU}

\newcommand{\bcirc}{\raisebox{0.5mm}{$\scriptstyle\bigcirc$}}

\newcommand{\bi}{{\bf i}}
\newcommand{\cA}{\mathcal{A}}

\newcommand{\cP}{{\cal P}}

\newcommand{\cM}{{\cal M}}

\newcommand{\sA}{{\sf A}}
\newcommand{\sB}{{\sf B}}

\newcommand{\sX}{{\sf X}}
\newcommand{\sY}{{\sf Y}}

\newcommand{\bz}{{\bf z}}

\newcommand{\bZ}{{\bf Z}}

\newcommand{\fA}{\mathfrak{A}}
\newcommand{\fM}{\mathfrak{M}}
\newcommand{\fG}{\mathfrak{G}}
\newcommand{\fC}{\mathfrak{C}}
\newcommand{\fR}{\mathfrak{R}}
\newcommand{\fH}{\mathfrak{H}}

\newcommand{\fO}{\mathfrak{O}}

\newcommand{\fL}{\mathfrak{L}}

\newcommand{\fg}{\mathfrak{g}}

\newcommand{\balpha}{\boldsymbol{\alpha}}
\newcommand{\cl}{C\kern -0.2em \ell}

\newcommand{\p}{\prime}
\newcommand{\e}{\mbox{\bf e}}

\newcommand{\ld}{\left[}
\newcommand{\rd}{\right]}

\newtheorem{thm}{Theorem}
\newtheorem{defn}{Definition}
\begin{document}
\title{Cyclic structures of Cliffordian supergroups and particle representations of $\spin_+(1,3)$}
\author{V.~V. Varlamov\\
{\small\it Department of Mathematics, Siberian State Industrial
University,}\\
{\small\it Kirova 42, Novokuznetsk 654007, Russia}}
\date{}
\maketitle
\begin{abstract}
Supergroups are defined in the framework of $\dZ_2$-graded Clifford
algebras over the fields of real and complex numbers, respectively.
It is shown that cyclic structures of complex and real supergroups
are defined by Brauer-Wall groups related with the modulo 2 and
modulo 8 periodicities of the complex and real Clifford algebras.
Particle (fermionic and bosonic) representations of a universal
covering (spinor group $\spin_+(1,3)$) of the proper orthochronous
Lorentz group are constructed via the Clifford algebra formalism.
Complex and real supergroups are defined on the representation
system of $\spin_+(1,3)$. It is shown that a cyclic (modulo 2)
structure of the complex supergroup is equivalent to a
supersymmetric action, that is, it converts fermionic
representations into bosonic representations and vice versa. The
cyclic action of the real supergroup leads to a much more
high-graded symmetry related with the modulo 8 periodicity of the
real Clifford algebras. This symmetry acts on the system of real
representations of $\spin_+(1,3)$.
\end{abstract}
{\bf Keywords}: Clifford algebras, supergroups, spinor
representations\\
MSC 2010:\;{\bf 15A66, 15A90, 20645}\\
PACS numbers:\;{\bf 02.10.Tq, 11.30.Er, 11.30.Cp}

\section{Introduction}
It is well known that Clifford algebras (systems of hypercomplex
numbers) have a broad application in many areas of mathematical and
theoretical physics. From historical point of view, Clifford
algebras have essentially geometric origin \cite{3}, because they
are the synthesis of Hamilton quaternion calculus \cite{Ham} and
Grassmann {\it Ausdehnungslehre} \cite{Grass}, and by this reason
they called by Clifford as {\it geometric algebras} \cite{Cliff2}.
Further, Lipschitz \cite{Lips} showed that Clifford algebras are
related closely with the study of rotation groups of
multidimensional spaces. After fundamental works of Cartan
\cite{Car08}, Witt \cite{Wit37} and Chevalley \cite{Che54} the
Clifford algebra theory takes its modern form
\cite{Cru91,Port95,Lou97}. It is known that Clifford algebras
inherit $\dZ_2$-graded structure from the Grassmann algebras. This
fact allows one to define \emph{Cliffordian supergroups} using a
classical theory of formal Lie groups \cite{Boc46,BK70}. Along with
the $\dZ_2$-graded structure Clifford algebras possess modulo 2 and
modulo 8 periodicities over the fields of complex and real numbers,
respectively (see \cite{AtBSh,BTr87,BT88}). These periodic relations
are described by Brauer-Wall groups \cite{Wal64,Lou81}. The cyclic
(modulo 2 and 8) structure is the most essential property of the
Cliffordian supergroups.

In the present work we consider the field
$\boldsymbol{\psi}(\balpha)=\langle
x,\fg\,|\boldsymbol{\psi}\rangle$ on the representation spaces of a
spinor group $\spin_+(1,3)$, where $x\in T_4$ and
$\fg\in\spin_+(1,3)$ ($T_4$ is a translation subgroup of the
Poincar\'{e} group $\cP$). At this point, four parameters $x^\mu$
correspond to position of the point-like object, whereas remaining
six parameters $\bz\in\spin_+(1,3)$ define orientation in quantum
description of orientable (extended) object \cite{GS09,GS10}. In
general, the field $\boldsymbol{\psi}(\balpha)$ is defined on the
homogeneous space $\cM_{10}=\R^{1,3}\times\fL_6$ (a group manifold
of $\cP$), where $\R^{1,3}$ is the Minkowski spacetime and $\fL_6$
is a group manifold of $\SO_0(1,3)$. On the other hand, the space
$\cM_{10}=\R^{1,3}\times\fL_6$ can be understood as a fiber bundle,
where a bundle base is $\R^{1,3}$. All intrinsic properties of the
field $\boldsymbol{\psi}(\balpha)$ are described within the group
$\spin_+(1,3)$ and its representations.

The paper is organized as follows. A short introduction to the
Clifford algebra theory is given in section 2. $\dZ_2$-graded
Clifford algebras, Brauer-Wall supergroups and Trautman diagrams
(spinorial clocks) are considered in section 3 over the fields of
real and complex numbers. Complex and real representations of the
field $\boldsymbol{\psi}(\balpha)=\langle
x,\fg\,|\boldsymbol{\psi}\rangle$ of different types are constructed
in section 4 within representations of $\spin_+(1,3)$. A
relationship between tensor products of the Clifford algebras and a
Gel'fand-Naimark representation basis of the proper orthochronous
Lorentz group is established. Complex and real supergroups are
defined on the representation systems of $\spin_+(1,3)$ in section
5.

\section{Algebraic preliminaries}
In this section we will consider some basic facts concerning
Clifford algebras. Let $\F$ be a field of characteristic 0
$(\F=\R,\,\F=\C)$, where $\R$ and $\C$ are the fields of real and
complex numbers, respectively. A Clifford algebra $\cl$ over a field
$\F$ is an algebra with $2^n$ basis elements: $\e_0$ (unit of the
algebra) $\e_1,\e_2,\ldots,\e_n$ and products of the one-index
elements $\e_{i_1i_2\ldots i_k}=\e_{i_1}\e_{i_2}\ldots\e_{i_k}$.
Over the field $\F=\R$ the Clifford algebra is denoted as
$\cl_{p,q}$, where the indices $p$ and $q$ correspond to the indices
of the quadratic form
\[
Q=x^2_1+\ldots+x^2_p-\ldots-x^2_{p+q}
\]
of a vector space $V$ associated with $\cl_{p,q}$. The
multiplication law of $\cl_{p,q}$ is defined by a following rule:
\begin{equation}\label{e1}
\e^2_i=\sigma(p-i)\e_0,\quad\e_i\e_j=-\e_j\e_i,
\end{equation}
where
\begin{equation}\label{e2}
\sigma(n)=\left\{\begin{array}{rl}
-1 & \mbox{if $n\leq 0$},\\
+1 & \mbox{if $n>0$}.
\end{array}\right.
\end{equation}
The square of a volume element\index{element!volume}
$\omega=\e_{12\ldots n}$ ($n=p+q$) plays an important role in the
theory of Clifford algebras,
\begin{equation}\label{e3}
\omega^2=\left\{\begin{array}{rl}
-1 & \mbox{if $p-q\equiv 2,3,6,7\pmod{8}$},\\
+1 & \mbox{if $p-q\equiv 0,1,4,5\pmod{8}$}.
\end{array}\right.
\end{equation}
A center\index{center} $\bZ_{p,q}$ of the algebra $\cl_{p,q}$
consists of the unit $\e_0$ and the volume element $\omega$. The
element $\omega=\e_{12\ldots n}$ belongs to a center when $n$ is
odd. Indeed,
\begin{eqnarray}
\e_{12\ldots n}\e_i&=&(-1)^{n-i}\sigma(q-i)\e_{12\ldots i-1
i+1\ldots n},
\nonumber\\
\e_i\e_{12\ldots n}&=&(-1)^{i-1}\sigma(q-i)\e_{12\ldots i-1
i+1\ldots n}, \nonumber
\end{eqnarray}
therefore, $\omega\in\bZ_{p,q}$ if and only if $n-i\equiv
i-1\pmod{2}$, that is, $n$ is odd. Further, using (\ref{e3}) we
obtain
\begin{equation}\label{e4}
\bZ_{p,q}=\left\{\begin{array}{rl}
\phantom{1,}1 & \mbox{if $p-q\equiv 0,2,4,6\pmod{8}$},\\
1,\omega & \mbox{if $p-q\equiv 1,3,5,7\pmod{8}$}.
\end{array}\right.
\end{equation}

An arbitrary element $\cA$ of the algebra $\cl_{p,q}$ is represented
by a following formal polynomial
\begin{gather}
\cA=a^0\e_0+\sum^n_{i=1}a^i\e_i+\sum^n_{i=1}\sum^n_{j=1}a^{ij}\e_{ij}+
\ldots+\sum^n_{i_1=1}\cdots\sum^n_{i_k=1}a^{i_1\ldots
i_k}\e_{i_1\ldots i_k}+
\nonumber\\
+\ldots+a^{12\ldots n}\e_{12\ldots n}=\sum^n_{k=0}a^{i_1i_2\ldots
i_k} \e_{i_1i_2\ldots i_k}.\nonumber
\end{gather}
In Clifford algebra $\cl$ there exist four fundamental automorphisms.\\[0.2cm]
1) {\bf Identity}: An automorphism $\cA\rightarrow\cA$ and
$\e_{i}\rightarrow\e_{i}$.\\
This automorphism is an identical automorphism of the algebra $\cl$.
$\cA$ is an arbitrary element of $\cl$.\\[0.2cm]
2) {\bf Involution}: An automorphism $\cA\rightarrow\cA^\star$ and
$\e_{i}\rightarrow-\e_{i}$.\\
In more details, for an arbitrary element $\cA\in\cl$ there exists a
decomposition $ \cA=\cA^{\p}+\cA^{\p\p}, $ where $\cA^{\p}$ is an
element consisting of homogeneous odd elements, and $\cA^{\p\p}$ is
an element consisting of homogeneous even elements, respectively.
Then the automorphism $\cA\rightarrow\cA^{\star}$ is such that the
element $\cA^{\p\p}$ is not changed, and the element $\cA^{\p}$
changes sign: $ \cA^{\star}=-\cA^{\p}+\cA^{\p\p}. $ If $\cA$ is a
homogeneous element, then
\begin{equation}\label{auto16}
\cA^{\star}=(-1)^{k}\cA,
\end{equation}
where $k$ is a degree of the element. It is easy to see that the
automorphism $\cA\rightarrow\cA^{\star}$ may be expressed via the
volume element $\omega=\e_{12\ldots p+q}$:
\begin{equation}\label{auto17}
\cA^{\star}=\omega\cA\omega^{-1},
\end{equation}
where $\omega^{-1}=(-1)^{\frac{(p+q)(p+q-1)}{2}}\omega$. When $k$ is
odd, the basis elements $\e_{i_{1}i_{2}\ldots i_{k}}$ the sign
changes, and when $k$ is even, the sign
is not changed.\\[0.2cm]
3) {\bf Reversion}: An antiautomorphism
$\cA\rightarrow\widetilde{\cA}$ and
$\e_i\rightarrow\e_i$.\\
The antiautomorphism $\cA\rightarrow\widetilde{\cA}$ is a reversion
of the element $\cA$, that is the substitution of each basis element
$\e_{i_{1}i_{2}\ldots i_{k}}\in\cA$ by the element
$\e_{i_{k}i_{k-1}\ldots i_{1}}$:
\[
\e_{i_{k}i_{k-1}\ldots i_{1}}=(-1)^{\frac{k(k-1)}{2}}
\e_{i_{1}i_{2}\ldots i_{k}}.
\]
Therefore, for any $\cA\in\cl_{p,q}$ we have
\begin{equation}\label{auto19}
\widetilde{\cA}=(-1)^{\frac{k(k-1)}{2}}\cA.
\end{equation}
4) {\bf Conjugation}: An antiautomorphism
$\cA\rightarrow\widetilde{\cA^\star}$
and $\e_i\rightarrow-\e_i$.\\
This antiautomorphism is a composition of the antiautomorphism
$\cA\rightarrow\widetilde{\cA}$ with the automorphism
$\cA\rightarrow\cA^{\star}$. In the case of a homogeneous element
from the formulae (\ref{auto16}) and (\ref{auto19}), it follows
\begin{equation}\label{20}
\widetilde{\cA^{\star}}=(-1)^{\frac{k(k+1)}{2}}\cA.
\end{equation}

The all Clifford algebras $\cl_{p,q}$ over the field $\F=\R$ are
divided
into eight different types with a following division ring structure.\\[0.3cm]
{\bf I}. Central simple algebras.
\begin{description}
\item[1)] Two types $p-q\equiv 0,2\pmod{8}$ with a division ring
$\K\simeq\R$.
\item[2)] Two types $p-q\equiv 3,7\pmod{8}$ with a division ring
$\K\simeq\C$.
\item[3)] Two types $p-q\equiv 4,6\pmod{8}$ with a division ring
$\K\simeq\BH$.
\end{description}
{\bf II}. Semi-simple algebras.
\begin{description}
\item[4)] The type $p-q\equiv 1\pmod{8}$ with a double division ring
$\K\simeq\R\oplus\R$.
\item[5)] The type $p-q\equiv 5\pmod{8}$ with a double quaternionic
division ring $\K\simeq\BH\oplus\BH$.
\end{description}
The Table 1 (Budinich-Trautman Periodic Table \cite{BT88})
explicitly shows a distribution of the real Clifford algebras in
dependence on the division ring structure\index{structure!division
ring}, here ${}^2\R(n)=\R(n)\oplus\R(n)$ and
${}^2\BH(n)=\BH(n)\oplus\BH(n)$.
\begin{figure}\label{Periodic}
\begin{center}
{\renewcommand{\arraystretch}{1.2}
\begin{tabular}{c|ccccccccl}
p  &  0 & 1 & 2 & 3 & 4 & 5 & 6 & 7 & \ldots\\ \hline
q &      &   &   &   &   &   &   &   &\\
0 &
$\R$&${}^2\R$&$\R(2)$&$\C(2)$&$\BH(2)$&${}^2\BH(2)$&$\BH(4)$&$\C(8)$&
$\ldots$\\
1&$\C$&$\R(2)$&${}^2\R(2)$&$\R(4)$&$\C(4)$&$\BH(4)$&${}^2\BH(4)$&$\BH(8)$&
$\ldots$\\
2&$\BH$&$\C(2)$&$\R(4)$&${}^2\R(4)$&$\R(8)$&$\C(8)$&$\BH(8)$&${}^2\BH(8)$&
$\ldots$\\
3&${}^2\BH$&$\BH(2)$&$\C(4)$&$\R(8)$&${}^2\R(8)$&$\R(16)$&$\C(16)$&$\BH(16)$&
$\ldots$\\
4&$\BH(2)$&${}^2\BH(2)$&$\BH(4)$&$\C(8)$&$\R(16)$&${}^2\R(16)$&$\R(32)$&
$\C(32)$&$\ldots$\\
5&$\C(4)$&$\BH(4)$&${}^2\BH(4)$&$\BH(8)$&$\C(16)$&$\R(32)$&${}^2\R(32)$&
$\R(64)$&$\ldots$\\
6&$\R(8)$&$\C(8)$&$\BH(8)$&${}^2\BH(8)$&$\BH(16)$&$\C(32)$&$\R(64)$&
${}^2\R(64)$&$\ldots$\\
7&${}^2\R(8)$&$\R(16)$&$\C(16)$&$\BH(16)$&${}^2\BH(16)$&$\BH(32)$&$\C(64)$&
$\R(128)$&$\ldots$\\
$\vdots$&$\vdots$&$\vdots$&$\vdots$&$\vdots$&$\vdots$&$\vdots$&$\vdots$&
$\vdots$
\end{tabular}}
\end{center}
\hspace{0.3cm}
\begin{center}{\small \textbf{Tab.\,1:} Distribution of the real Clifford algebras.}
\end{center}
\end{figure}

Over the field $\F=\C$ there is an isomorphism
$\C_n\simeq\Mat_{2^{n/2}}(\C)$ and there are two different types of
complex Clifford algebras $\C_n$: $n\equiv 0\pmod{2}$ and $n\equiv
1\pmod{2}$.

When $\cl_{p,q}$ is simple, then the map
\begin{equation}\label{Simple}
\cl_{p,q}\overset{\gamma}{\longrightarrow}\End_{\K}(\dS),\quad
u\longrightarrow\gamma(u),\quad \gamma(u)\psi=u\psi
\end{equation}\begin{sloppypar}\noindent
gives an irreducible and faithful representation of $\cl_{p,q}$ in
the spinspace $\dS_{2^m}(\K)\simeq I_{p,q}=\cl_{p,q}f$, where
$\psi\in\dS_{2^m}$, $m=\frac{p+q}{2}$, $I_{p,q}$ is a minimal left
ideal of $\cl_{p,q}$.\end{sloppypar}

On the other hand, when $\cl_{p,q}$ is semi-simple, then the map
\begin{equation}\label{Semi-Simple}
\cl_{p,q}\overset{\gamma}{\longrightarrow}\End_{\K\oplus\hat{\K}}
(\dS\oplus\hat{\dS}),\quad u\longrightarrow\gamma(u),\quad
\gamma(u)\psi=u\psi
\end{equation}
gives a faithful but reducible representation of $\cl_{p,q}$ in the
double spinspace\index{spinspace!double} $\dS\oplus\hat{\dS}$, where
$\hat{\dS}=\{\hat{\psi}|\psi\in\dS\}$. In this case, the ideal
$\dS\oplus\hat{\dS}$ possesses a right $\K\oplus\hat{\K}$-linear
structure, $\hat{\K}=\{\hat{\lambda}|\lambda\in\K\}$, and
$\K\oplus\hat{\K}$ is isomorphic to the double division ring
$\R\oplus\R$ when $p-q\equiv 1\pmod{8}$ or to $\BH\oplus\BH$ when
$p-q\equiv 5\pmod{8}$. The map $\gamma$ in (\ref{Simple}) and
(\ref{Semi-Simple}) defines the so called {\it left-regular} spinor
representation\index{representation!spinor} of $\cl(Q)$ in $\dS$ and
$\dS\oplus\hat{\dS}$, respectively. Furthermore, $\gamma$ is {\it
faithful} which means that $\gamma$ is an algebra monomorphism. In
(\ref{Simple}), $\gamma$ is {\it irreducible} which means that $\dS$
possesses no proper (that is, $\neq 0,\,\dS$) invariant
subspaces\index{subspace!invariant} under the left action of
$\gamma(u)$, $u\in\cl_{p,q}$. Representation $\gamma$ in
(\ref{Semi-Simple}) is therefore {\it reducible} since
$\{(\psi,0)|\psi\in\dS\}$ and
$\{(0,\hat{\psi})|\hat{\psi}\in\hat{\dS}\}$ are two proper subspaces
of $\dS\oplus\hat{\dS}$ invariant under $\gamma(u)$ (see
\cite{Lou97,Cru91,Port95}).

\section{$\dZ_2$-graded Clifford algebras and Brauer-Wall groups}
\label{Sec:1.5} The algebra $\cl$ is naturally $\dZ_2$-graded. Let
$\cl^+$ (correspondingly $\cl^-$) be a set consisting of all even
(correspondingly odd) elements of the algebra $\cl$. The set $\cl^+$
is a subalgebra of $\cl$. It is obvious that $\cl=\cl^+\oplus\cl^-$,
and also $\cl^+\cl^+ \subset\cl^+,\,\cl^+\cl^-\subset\cl^-,\,
\cl^-\cl^+\subset\cl^-,\,\cl^-\cl^-\subset \cl^+$. A degree $\deg a$
of the even (correspondingly odd) element $a\in\cl$ is equal to 0
(correspondingly 1). Let $\mathfrak{A}$ and $\mathfrak{B}$ be the
two associative $\dZ_2$-graded algebras over the field $\F$; then a
multiplication of homogeneous elements\index{element!homogeneous}
$\mathfrak{a}^\prime\in\mathfrak{A}$ and
$\mathfrak{b}\in\mathfrak{B}$ in a graded tensor
product\index{product!graded tensor}
$\mathfrak{A}\hat{\otimes}\mathfrak{B}$ is defined as follows:
$(\mathfrak{a}\otimes \mathfrak{b})(\mathfrak{a}^\prime \otimes
\mathfrak{b}^\prime)=(-1)^{\deg\mathfrak{b}\deg\mathfrak{a}^\prime}
\mathfrak{a}\mathfrak{a}^\prime\otimes\mathfrak{b}\mathfrak{b}^\prime$.
\begin{thm}[{\rm Chevalley \cite{Che55}}]
Let $V$ and $V^\prime$ are vector spaces over the field $\F$ and let
$Q$ and $Q^\prime$ are quadratic forms\index{form!quadratic} for $V$
and $V^\prime$. Then a Clifford algebra $\cl(V\oplus
V^\prime,Q\oplus Q^\prime)$ is naturally isomorphic to
$\cl(V,Q)\hat{\otimes}\cl(V^\prime,Q^\prime)$.
\end{thm}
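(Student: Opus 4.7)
The plan is to invoke the universal property of the Clifford algebra. First I would define a linear map
\[
f: V\oplus V' \longrightarrow \cl(V,Q)\hat{\otimes}\cl(V',Q'),\qquad f(v+v')=v\otimes 1+1\otimes v',
\]
where $V$ and $V'$ are identified with their canonical images as degree-one generators in $\cl(V,Q)$ and $\cl(V',Q')$ respectively.

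The key computation is to verify that $f(v+v')^{2}=(Q\oplus Q')(v+v')\cdot(1\otimes 1)$. Expanding using the graded multiplication rule stated just before the theorem, namely $(\mathfrak{a}\otimes\mathfrak{b})(\mathfrak{a}'\otimes\mathfrak{b}')=(-1)^{\deg\mathfrak{b}\deg\mathfrak{a}'}\mathfrak{a}\mathfrak{a}'\otimes\mathfrak{b}\mathfrak{b}'$, the diagonal terms give $Q(v)\otimes 1$ and $1\otimes Q'(v')$. The crucial point is that the cross terms
\[
(v\otimes 1)(1\otimes v')+(1\otimes v')(v\otimes 1)=v\otimes v'+(-1)^{1\cdot 1}v\otimes v'=0
\]
cancel, because both $v$ and $v'$ are odd generators. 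This cancellation is precisely what forces us to use the graded tensor product $\hat{\otimes}$ rather than the ordinary one. By the universal property of $\cl(V\oplus V',Q\oplus Q')$, the map $f$ then extends uniquely to an algebra homomorphism
\[
\tilde{f}:\cl(V\oplus V',Q\oplus Q')\longrightarrow \cl(V,Q)\hat{\otimes}\cl(V',Q').
\]

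To finish, I would show $\tilde{f}$ is an isomorphism. Surjectivity is immediate, since the image of $\tilde{f}$ contains both $V\otimes 1$ and $1\otimes V'$, and these together generate the graded tensor product as an algebra. A dimension count concludes the proof: both source and target have dimension $2^{\dim V+\dim V'}$ over $\F$, so the surjection $\tilde{f}$ is necessarily bijective. Naturality of the isomorphism follows at once from the fact that $\tilde{f}$ was produced from a universal property, hence is compatible with morphisms of quadratic vector spaces.

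The only genuinely delicate point is the sign cancellation of the cross terms; the rest is bookkeeping via the universal property and a dimension comparison. Without the $\dZ_2$-grading on the tensor product the squaring identity would fail, so the theorem really is a statement about the graded category.
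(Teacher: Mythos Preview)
Your proof is correct and the forward half is essentially identical to the argument the paper gives (buried in the Supergroups subsection rather than immediately after the theorem statement): the paper defines the same linear map $j''(v,v')=j(v)\otimes 1+1\otimes j'(v')$, performs the same squaring computation, and invokes the universal property to obtain a homomorphism $\psi:\cl(V\oplus V',Q\oplus Q')\to\cl(V,Q)\hat{\otimes}\cl(V',Q')$.

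The only genuine difference is how the isomorphism is established. You argue surjectivity plus a dimension count. The paper instead constructs an explicit inverse $\theta:\cl(V,Q)\hat{\otimes}\cl(V',Q')\to\cl(V\oplus V',Q\oplus Q')$ by using the homomorphisms $\gamma,\gamma'$ induced from the inclusions $V,V'\hookrightarrow V\oplus V'$, checking that $\gamma(x)\gamma'(x')=(-1)^{\deg x\deg x'}\gamma'(x')\gamma(x)$ (again the key sign), and setting $\theta(x\otimes x')=\gamma(x)\gamma'(x')$; then $\psi$ and $\theta$ are declared mutually inverse. Your dimension argument is quicker and perfectly valid here since the paper works only over a field with finite-dimensional $V,V'$; the paper's explicit-inverse route has the advantage of not depending on finite dimension and of making the ``natural'' part of the isomorphism more transparent, since both $\psi$ and $\theta$ come from universal properties.
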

Let $\cl(V,Q)$ be the Clifford algebra over the field $\F=\R$, where
$V$ is a vector space endowed with quadratic form
$Q=x^2_1+\ldots+x^2_p-\ldots-x^2_{p+q}$. If $p+q$ is even and
$\omega^2=1$, then $\cl(V,Q)$ is called {\it positive} and
correspondingly {\it negative} if $\omega^2=-1$, that is,
$\cl_{p,q}>0$ if $p-q\equiv 0,4\pmod{8}$ and $\cl_{p,q}<0$ if
$p-q\equiv 2,6\pmod{8}$.
\begin{thm}[{\rm Karoubi \cite[Prop.~3.16]{Kar79}}]
1) If $\cl(V,Q)>0$ and $\dim V$ is even, then
\[
\cl(V\oplus V^{\p},Q\oplus
Q^{\p})\simeq\cl(V,Q)\otimes\cl(V^{\p},Q^{\p}).
\]
2) If $\cl(V,Q)<0$ and $\dim V$ is even, then
\[
\cl(V\oplus V^{\p},Q\oplus
Q^{\p})\simeq\cl(V,Q)\otimes\cl(V^{\p},-Q^{\p}).
\]
\end{thm}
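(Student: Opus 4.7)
The plan is to upgrade Chevalley's theorem (which supplies the \emph{graded} tensor product $\cl(V,Q)\hat{\otimes}\cl(V^\prime,Q^\prime)$) to an \emph{ungraded} tensor product by exploiting the volume element $\omega=\e_1\cdots\e_n$ of $\cl(V,Q)$. The key structural fact is that when $n=\dim V$ is even, $\omega$ anticommutes with every $v\in V$; indeed $\omega\e_i=(-1)^{n-1}\e_i\omega$, and the evenness of $n$ supplies the minus sign. Moreover $\omega^2=+1$ in case~1) and $\omega^2=-1$ in case~2), by the very definition of the positive/negative Clifford algebras recalled just before the statement.

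Next I would apply the universal property of the Clifford algebra. Denote by $R$ the ordinary tensor product on the right-hand side, and define a linear map $\phi\colon V\oplus V^\prime\to R$ by $\phi(v\oplus v^\prime)=v\otimes 1+\omega\otimes v^\prime$. The central computation is the square:
\[
\phi(v\oplus v^\prime)^2 = v^2\otimes 1 \;+\;(v\omega+\omega v)\otimes v^\prime\;+\;\omega^2\otimes (v^\prime)^2.
\]
The middle term vanishes by the anticommutation identity above. In case~1), $\omega^2=1$ and $(v^\prime)^2=Q^\prime(v^\prime)$ in $\cl(V^\prime,Q^\prime)$, giving $Q(v)+Q^\prime(v^\prime)=(Q\oplus Q^\prime)(v\oplus v^\prime)$. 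In case~2), $\omega^2=-1$ combines with $(v^\prime)^2=-Q^\prime(v^\prime)$ in $\cl(V^\prime,-Q^\prime)$ to yield the same expression. By universality, $\phi$ then extends to an algebra homomorphism $\Phi$ from $\cl(V\oplus V^\prime,Q\oplus Q^\prime)$ to $R$.

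Finally I would verify that $\Phi$ is an isomorphism via surjectivity plus a dimension count. The image of $\Phi$ contains $v\otimes 1$ for every $v\in V$, hence contains the entire subalgebra $\cl(V,Q)\otimes 1$; in particular $\omega\otimes 1$ and its inverse $\pm\omega\otimes 1$ lie in the image, and then the product $(\omega^{-1}\otimes 1)(\omega\otimes v^\prime)=1\otimes v^\prime$ shows that $1\otimes V^\prime$ also lies in the image. Thus $\Phi$ hits a generating set of $R$. Since both algebras have dimension $2^{\dim V+\dim V^\prime}$, surjectivity forces bijectivity. The only step I would call an obstacle is the anticommutation identity $\omega v+v\omega=0$ in $\cl(V,Q)$: it is precisely where the hypothesis $\dim V$ even enters, and it is the sole device that untwists the graded tensor product of Chevalley's theorem into the ordinary one. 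The case dichotomy ($Q^\prime$ versus $-Q^\prime$) is then a routine bookkeeping consequence of the two possible signs of $\omega^2$.
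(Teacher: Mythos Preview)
Your argument is correct and is in fact the standard proof (essentially Karoubi's own): twist the embedding $V^\prime\hookrightarrow\cl(V,Q)\otimes\cl(V^\prime,\pm Q^\prime)$ by the volume element $\omega$, use the anticommutation $\omega v=-v\omega$ (valid precisely because $\dim V$ is even) to kill the cross term, and let the sign of $\omega^2$ dictate whether $Q^\prime$ or $-Q^\prime$ appears on the right. The surjectivity-plus-dimension argument is clean, and your identification of the anticommutation step as the place where the even-dimension hypothesis enters is exactly right.

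There is nothing to compare against in the paper itself: the theorem is quoted from \cite[Prop.~3.16]{Kar79} without proof and used as a black box (for instance in deriving the factorization (\ref{3}) and later (\ref{Ten})). So your proposal supplies what the paper omits, and does so along the lines of the original source.
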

\begin{sloppypar}\noindent
Over the field $\F=\C$ the Clifford algebra is always positive (if
$\omega^2=\e^2_{12\ldots p+q}=-1$, then we can suppose $\omega=
i\e_{12\ldots p+q}$). Thus, using Karoubi Theorem, we find that
\end{sloppypar}
\begin{equation}\label{3}
\underbrace{\C_2\otimes\cdots\otimes\C_2}_{m\,\text{times}}\simeq\C_{2m}.
\end{equation}
Therefore, the tensor product in (\ref{3}) is isomorphic to
$\C_{2m}$. For example, there are two different factorizations
$\cl_{1,1}\otimes\cl_{0,2}$ and $\cl_{1,1}\otimes\cl_{2,0}$ for the
spacetime algebra $\cl_{1,3}$ and Majorana algebra $\cl_{3,1}$.
\subsection{Supergroups} Let us consider graded tensor
products $\fA_1\hat{\otimes}\fA_2$ and
$\fA_1\hat{\otimes}\fA_2\hat{\otimes}\fA_3$ over the field $\F$ of
characteristics $0$, where $\fA=\cl_{p,q}$ over the field $\F=\R$
and $\fA=\C_n$ over $\F=\C$. These products possess the same
properties as the algebra $\fA$. Let $\e^\prime_i$
($i=1,\ldots,n_1$), $\e^{\prime\prime}_i$ ($i=1,\ldots,n_2$),
$\e^{\prime\prime\prime}_i$ ($i=1,\ldots,n_3$) be the units of the
algebras $\fA_i$. Then the units of the algebras
$\fA_1\hat{\otimes}\fA_2$ and
$\fA_1\hat{\otimes}\fA_2\hat{\otimes}\fA_3$ are the sets
$\e^\prime_i,\,\e^{\prime\prime}_j$ and
$\e^\prime_i,\,\e^{\prime\prime}_j,\,\e^{\prime\prime\prime}_k$,
respectively. General elements of the algebras
$\fA_1\hat{\otimes}\fA_2$ and
$\fA_1\hat{\otimes}\fA_2\hat{\otimes}\fA_3$ are formal polynomials
from $n_1+n_2$ and $n_1+n_2+n_3$ units. Dimensionalities of
$\fA_1\hat{\otimes}\fA_2$ and
$\fA_1\hat{\otimes}\fA_2\hat{\otimes}\fA_3$ are equal to
$2^{n_1+n_2}$ and $2^{n_1+n_2+n_3}$, respectively.

Let $d$ be a mapping of $\fA\hat{\otimes}\fA$ into $\fA$ ($d$
defines a multiplication operation in the algebra $\fA$:
$d_{a_1\otimes a_2}=a_1a_2$). Further, let us suppose
$\eta(k)=k\e_0$, where $\e_0$ is the unit of the algebra $\fA$,
$k\in\F$, and $\varepsilon_a=a(0)$. The mapping $\eta$ defines a
natural embedding of the field $\F$ into $\fA$, $\varepsilon$ is an
homomorphism of the algebra $\fA$ into the field $\F$.
\begin{defn}
The algebra $\fA$ is called a supergroup
space\index{space!supergroup} if there exist the homomorphism
$\gamma:\;\fA\rightarrow\fA\hat{\otimes}\fA$ and an isomorphism
$\bcirc:\;\fA\rightarrow\fA$ with the following commutative
diagrams:
\begin{gather}
\dgARROWLENGTH=2.5em
\begin{diagram}
\node[2]{\fA\hat{\otimes}\fA}\arrow{se,t}{I\otimes\gamma}\\
\node{\fA}\arrow{se,b}{\gamma}\arrow{ne,t}{\gamma}
\node[2]{\fA\hat{\otimes}\fA\hat{\otimes}\fA}\\
\node[2]{\fA\hat{\otimes}\fA}\arrow{ne,b}{\gamma\otimes I}
\end{diagram}\label{Diag1}\\[0.3cm]
\dgARROWLENGTH=2.5em
\begin{diagram}
\node[2]{\fA\hat{\otimes}\fA}\arrow{se,t}{I\otimes\varepsilon}\\
\node{\fA}\arrow{ne,t}{\gamma}\arrow[2]{<>,b}{I}\node[2]{\fA}
\end{diagram}\label{Diag2}\\[0.5cm]
%\end{equation}
%\begin{equation}\label{Diag3}
\dgARROWLENGTH=2.5em
\begin{diagram}
\node{\fA\hat{\otimes}\fA}\arrow[2]{e,t}{I\otimes{\scriptstyle\bigcirc}}
\node[2]{\fA\hat{\otimes}\fA}\arrow{s,r}{d}\\
\node{\fA}\arrow{n,l}{\gamma}\arrow{e,b}{\varepsilon}\node{\F}
\arrow{e,b}{\eta}\node{\fA}
\end{diagram}\label{Diag3}
\end{gather}
\end{defn}
\begin{defn}
The totality $G(\fA,\gamma,\bcirc)$ is called a Cliffordian
supergroup.\index{supergroup}
\end{defn}
The homomorphism $\gamma$ is defined as follows. Let
\begin{eqnarray}
j:&&V\longrightarrow\cl(V,Q),\nonumber\\
j^\prime:&&V^\prime\longrightarrow\cl(V^\prime,Q^\prime)\nonumber
\end{eqnarray}
be canonical mappings (linear mappings of the vector spaces into the
algebras). Let us assume that $V$ and $V^\prime$ are mutually
orthogonal, then $V\oplus V^\prime$. Let
\[
j^{\prime\prime}:\; V\oplus V^\prime\longrightarrow
\cl(V,Q)\hat{\otimes}\cl(V^\prime,Q^\prime)
\]
be a canonical mapping of the direct sum $V\oplus V^\prime$ into the
graded tensor product of the Clifford algebras defined by the
formula $j^{\prime\prime}(v,v^\prime)=j(v)\otimes 1+1\otimes
j(v^\prime)$, where $v$ and $v^\prime$ are vectors of the spaces $V$
and $V^\prime$, respectively. Hence it immediately follows that
$(j^{\prime\prime}(v,v^\prime))^2=[(j(v))^2+(j^\prime(v^\prime))^2]\cdot
1= [Q(v)+Q^\prime(v^\prime)]\cdot 1$, since $v$ and $v^\prime$ are
orthogonal. Thus, the mapping $j^{\prime\prime}$ induces an
homomorphism
\[
\psi:\;\cl(V\oplus V^\prime,Q\oplus Q^\prime)\longrightarrow
\cl(V,Q)\hat{\otimes}\cl(V^\prime,Q^\prime).
\]
Let
\begin{eqnarray}
\gamma:&&\cl(V,Q)\longrightarrow\cl(V\oplus V^\prime,Q\oplus
Q^\prime),
\label{HomG}\\
\gamma^\prime:&&\cl(V^\prime,Q^\prime)\longrightarrow \cl(V\oplus
V^\prime,Q\oplus Q^\prime)\label{HomG'}
\end{eqnarray}
be homomorphisms induced by embeddings of $V$ and $V^\prime$ into
$V\oplus V^\prime$. Since $V$ and $V^\prime$ are orthogonal in
$V\oplus V^\prime$, then there exists an equality
$\gamma(x)\gamma^\prime(x^\prime)=(-1)^{\deg x\deg x^\prime}
\gamma^\prime(x^\prime)\gamma(x)$, where $x$ and $x^\prime$ are
homogeneous elements of the algebras $\cl(V,Q)$ and
$\cl(V^\prime,Q^\prime)$. Therefore, $\gamma$ and $\gamma^\prime$
define an homomorphism
\[
\theta:\;\cl(V,Q)\hat{\otimes}\cl(V^\prime,Q^\prime)\longrightarrow
\cl(V\oplus V^\prime,Q\oplus Q^\prime)
\]
by the formula $\theta(x\otimes
x^\prime)=\gamma(x)\gamma^\prime(x^\prime)$. It is obvious that the
homomorphisms $\psi$ and $\theta$ are mutually invertible.

The homomorphism $\gamma\otimes I$ is understood as follows. Let
$f(a,\e)=\sum\limits_{k\ge 0}\sum\limits_{i_1,\ldots,i_k}
a^{i_1\ldots i_k}\e_{i_1}\cdots\e_{i_k}$ be a general element of the
algebra $\fA$, $a^{i_1\ldots i_k}\in\F$. If
$f(a_1,\e_1;a_2,\e_2)=f_1(a_1,\e_1)f_2(a_2,\e_2)$ is an element of
the algebra $\fA\hat{\otimes}\fA$, then
\[
(\gamma\otimes I f)(a_1,\e_1;a_2,\e_2;a_3,\e_3)=(\gamma
f_1)(a_1,\e_1;a_2,\e_2) f_2(a_3,\e_3)
\]
defines an action of the homomorphism $\gamma\otimes I$. The
homomorphisms $I\otimes\gamma$, $I\otimes\varepsilon$,
$I\otimes\bcirc$ are defined analogously.

The diagrams (\ref{Diag1})--(\ref{Diag2}) present the axioms of
associativity, right unit and right inverse element, respectively.
The first two diagrams show that $\gamma$ and $\varepsilon$ define
on $\fA$ the structure of a co-algebra.

%It is necessary and sufficiently to consider the axioms
%(\ref{Diag1})--(\ref{Diag3}) for the basis elements
%$\e_i=\e_{i_1}\cdots\e_{i_k}$ of the algebra $\fA$. Let
%$\gamma_i=\gamma_i(\e^\prime,\e)$ be an image of $\e_i$ at the
%mapping $\gamma$, and let $\gamma(\e^\prime,\e)$ be a totality of
%all $\gamma_i$,
%$\gamma(\e^\prime,\e)=(\gamma_1(\e^\prime,\e),\gamma_2(\e^\prime,\e),\ldots,
%\gamma_n(\e^\prime,\e))$. Correspondingly, let $\e^\circ_i$ be an
%image of the element $\e_i$ at the mapping $\bcirc$. Then from
%(\ref{Diag1})--(\ref{Diag3}) we obtain that
%\[
%\gamma_i(\e^{\prime\prime},\gamma(\e^\prime,\e))=
%\gamma_i(\gamma(\e^{\prime\prime},\e^\prime),\e),\quad
%\gamma_i(\e,\bcirc)=\e_i,\quad\gamma_i(\e,\e^\circ)=0.
%\]
%Moreover, $\deg\e^\circ_i=\e_i$, $\deg\gamma_i=\deg\e_i$.
\begin{sloppypar}
The conditions (\ref{Diag1})--(\ref{Diag3}) coincide with axioms for
classical formal groups\index{group!classical formal} \cite{Boc46}.
The main difference from the classical case consists in the
permutation conditions $\e_i\e_j=(-1)^{\deg\e_i\deg\e_j}\e_j\e_i$
for the unuts of the graded algebra $\fA$. As in the case of
classical theory, one can establish a left analog of the equalities
(\ref{Diag2}) and (\ref{Diag3}) \cite{BK70}.\end{sloppypar}

Let us consider a supergroup\index{supergroup}
$G(\fA,\gamma,\bcirc)$ over the field $\F=\R$. In this case we have
$\fA=\cl_{p,q}$. The real Clifford algebra $\cl_{p,q}$ is central
simple if $p-q\not\equiv 1,5\pmod{8}$. The graded tensor product of
the two graded central simple algebras\index{algebra!graded central
simple} is also graded central simple \cite[Theorem 2]{Wal64}. It is
known that for the Clifford algebra with odd dimensionality, the
isomorphisms are as follows: $\cl^+_{p,q+1}\simeq\cl_{p,q}$ and
$\cl^+_{p+1,q}\simeq\cl_{q,p}$ \cite{Rash,Port}. Thus,
$\cl^+_{p,q+1}$ and $\cl^+_{p+1,q}$ are central simple algebras.
Further, in accordance with Chevalley Theorem for the graded tensor
product there is an isomorphism
$\cl_{p,q}\hat{\otimes}\cl_{p^{\p},q^{\p}}\simeq
\cl_{p+p^{\p},q+q^{\p}}$. Two algebras $\cl_{p,q}$ and
$\cl_{p^{\p},q^{\p}}$ are said to be of the same class if
$p+q^{\p}\equiv p^{\p}+q\pmod{8}$. The graded central simple
Clifford algebras over the field $\F=\R$ form eight similarity
classes, which, as it is easy to see, coincide with the eight types
of the algebras $\cl_{p,q}$. The set of these 8 types (classes)
forms a Brauer-Wall group $BW_{\R}$ \cite{Wal64,Lou81}. It is
obvious that an action of the supergroup
$BW_{\R}=G(\cl_{p,q},\gamma,\bcirc)$ has a cyclic structure, which
is formally equivalent to the action of cyclic
group\index{group!cyclic} $\dZ_8$. The cyclic structure of the
supergroup $BW_{\R}$ may be represented on the Trautman diagram
(spinorial clock) \cite{BTr87,BT88} (Fig. 1) by means of a
transition $\cl^+_{p,q}\stackrel{h}{\longrightarrow}\cl_{p,q}$ (the
round on the diagram is realized by an hour-hand). At this point,
the type of the algebra is defined on the diagram by an equality
$q-p=h+8r$, where $h\in\{0,\ldots,7\}$, $r\in\dZ$.

%\centerline{\psfig{figure=pict.ps}}
%\epsffile[50 50 500 252]{pict.ps}
\begin{figure}[ht]
\[
\unitlength=0.5mm
%\linethickness{0.4pt}
\begin{picture}(100.00,110.00)
%\put(50,50){\circle{10}[tr]}
%\circle{20}
%\put(75,93){\line(1,-1){1}}
%\put(50.5,50.3){\vector(-1,-1){25}}
%\put(50,50){\vector(1,1){25}}
%\put(50,50){\vector(1,-1){25}}
%\put(50,50){\vector(-1,1){25}}
%\put(50,50){\vector(0,-1){35}}
%\put(50,50){\vector(0,1){35}}
%\put(50,50){\vector(1,0){35}}
%\put(50,50){\vector(-1,0){35}}
%\put(50,85){\vector(1,-1){15}}
%\put(50,85){\line(2,-3){15}}
%\put(50,85){\line(-2,-3){15}}
%\framebox(50,100)[c]{ALGEBRA}
%\oval(0,0)[tr]
%\put(0,0){$\lambda$}
%\put(50,50){$\cdot$}
%\put(50,100){$\cdot$}
%\put(100,50){$\cdot$}
%\put(100,100){$\cdot$}
%\put(100,0){$\cdot$}
%\put(50,0){$\cdot$}
%\put(0,50){$\cdot$}

%\put(0,100){$\bullet$}
% II quadrant
\put(97,67){$\C$}\put(105,64){$p-q\equiv 7\!\!\!\!\pmod{8}$}
\put(80,80){1} \put(75,93.3){$\cdot$} \put(75.5,93){$\cdot$}
\put(76,92.7){$\cdot$} \put(76.5,92.4){$\cdot$}
\put(77,92.08){$\cdot$} \put(77.5,91.76){$\cdot$}
\put(78,91.42){$\cdot$} \put(78.5,91.08){$\cdot$}
\put(79,90.73){$\cdot$} \put(79.5,90.37){$\cdot$}
\put(80,90.0){$\cdot$} \put(80.5,89.62){$\cdot$}
\put(81,89.23){$\cdot$} \put(81.5,88.83){$\cdot$}
\put(82,88.42){$\cdot$} \put(82.5,87.99){$\cdot$}
\put(83,87.56){$\cdot$} \put(83.5,87.12){$\cdot$}
\put(84,86.66){$\cdot$} \put(84.5,86.19){$\cdot$}
\put(85,85.70){$\cdot$} \put(85.5,85.21){$\cdot$}
\put(86,84.69){$\cdot$} \put(86.5,84.17){$\cdot$}
\put(87,83.63){$\cdot$} \put(87.5,83.07){$\cdot$}
\put(88,82.49){$\cdot$} \put(88.5,81.9){$\cdot$}
\put(89,81.29){$\cdot$} \put(89.5,80.65){$\cdot$}
\put(90,80){$\cdot$} \put(90.5,79.32){$\cdot$}
\put(91,78.62){$\cdot$} \put(91.5,77.89){$\cdot$}
\put(92,77.13){$\cdot$} \put(92.5,76.34){$\cdot$}
\put(93,75.51){$\cdot$} \put(93.5,74.65){$\cdot$}
\put(94,73.74){$\cdot$} \put(94.5,72.79){$\cdot$}
\put(96.5,73.74){\vector(1,-2){1}}
% IV quadrant
\put(80,20){3} \put(97,31){$\BH$}\put(105,28){$p-q\equiv
6\!\!\!\!\pmod{8}$} \put(75,6.7){$\cdot$} \put(75.5,7){$\cdot$}
\put(76,7.29){$\cdot$} \put(76.5,7.6){$\cdot$}
\put(77,7.91){$\cdot$} \put(77.5,8.24){$\cdot$}
\put(78,8.57){$\cdot$} \put(78.5,8.91){$\cdot$}
\put(79,9.27){$\cdot$} \put(79.5,9.63){$\cdot$} \put(80,10){$\cdot$}
\put(80.5,10.38){$\cdot$} \put(81,10.77){$\cdot$}
\put(81.5,11.17){$\cdot$} \put(82,11.58){$\cdot$}
\put(82.5,12.00){$\cdot$} \put(83,12.44){$\cdot$}
\put(83.5,12.88){$\cdot$} \put(84,13.34){$\cdot$}
\put(84.5,13.8){$\cdot$} \put(85,14.29){$\cdot$}
\put(85.5,14.79){$\cdot$} \put(86,15.3){$\cdot$}
\put(86.5,15.82){$\cdot$} \put(87,16.37){$\cdot$}
\put(87.5,16.92){$\cdot$} \put(88,17.5){$\cdot$}
\put(88.5,18.09){$\cdot$} \put(89,18.71){$\cdot$}
\put(89.5,19.34){$\cdot$} \put(90,20){$\cdot$}
\put(90.5,20.68){$\cdot$} \put(91,21.38){$\cdot$}
\put(91.5,22.11){$\cdot$} \put(92,22.87){$\cdot$}
\put(92.5,23.66){$\cdot$} \put(93,24.48){$\cdot$}
\put(93.5,25.34){$\cdot$} \put(94,26.25){$\cdot$}
\put(94.5,27.20){$\cdot$} \put(95,28.20){$\cdot$}
% I quadrant
\put(20,80){7} \put(25,93.3){$\cdot$} \put(24.5,93){$\cdot$}
\put(24,92.7){$\cdot$} \put(23.5,92.49){$\cdot$}
\put(23,92.08){$\cdot$} \put(22.5,91.75){$\cdot$}
\put(22,91.42){$\cdot$} \put(21.5,91.08){$\cdot$}
\put(21,90.73){$\cdot$} \put(20.5,90.37){$\cdot$}
\put(20,90){$\cdot$} \put(19.5,89.62){$\cdot$}
\put(19,89.23){$\cdot$} \put(18.5,88.83){$\cdot$}
\put(18,88.42){$\cdot$} \put(17.5,87.99){$\cdot$}
\put(17,87.56){$\cdot$} \put(16.5,87.12){$\cdot$}
\put(16,86.66){$\cdot$} \put(15.5,86.19){$\cdot$}
\put(15,85.70){$\cdot$} \put(14.5,85.21){$\cdot$}
\put(14,84.69){$\cdot$} \put(13.5,84.17){$\cdot$}
\put(13,83.63){$\cdot$} \put(12.5,83.07){$\cdot$}
\put(12,82.49){$\cdot$} \put(11.5,81.9){$\cdot$}
\put(11,81.29){$\cdot$} \put(10.5,80.65){$\cdot$}
\put(10,80){$\cdot$} \put(9.5,79.32){$\cdot$} \put(9,78.62){$\cdot$}
\put(8.5,77.89){$\cdot$} \put(8,77.13){$\cdot$}
\put(7.5,76.34){$\cdot$} \put(7,75.51){$\cdot$}
\put(6.5,74.65){$\cdot$} \put(6,73.79){$\cdot$}
\put(5.5,72.79){$\cdot$} \put(5,71.79){$\cdot$}
% III quadrant
\put(20,20){5} \put(25,6.7){$\cdot$} \put(24.5,7){$\cdot$}
\put(24,7.29){$\cdot$} \put(23.5,7.6){$\cdot$}
\put(23,7.91){$\cdot$} \put(22.5,8.24){$\cdot$}
\put(22,8.57){$\cdot$} \put(21.5,8.91){$\cdot$}
\put(21,9.27){$\cdot$} \put(20.5,9.63){$\cdot$} \put(20,10){$\cdot$}
\put(19.5,10.38){$\cdot$} \put(19,10.77){$\cdot$}
\put(18.5,11.17){$\cdot$} \put(18,11.58){$\cdot$}
\put(17.5,12){$\cdot$} \put(17,12.44){$\cdot$}
\put(16.5,12.88){$\cdot$} \put(16,13.34){$\cdot$}
\put(15.5,13.8){$\cdot$} \put(15,14.29){$\cdot$}
\put(14.5,14.79){$\cdot$} \put(14,15.3){$\cdot$}
\put(13.5,15.82){$\cdot$} \put(13,16.37){$\cdot$}
\put(12.5,16.92){$\cdot$} \put(12,17.5){$\cdot$}
\put(11.5,18.09){$\cdot$} \put(11,18.71){$\cdot$}
\put(10.5,19.34){$\cdot$} \put(10,20){$\cdot$}
\put(9.5,20.68){$\cdot$} \put(9,21.38){$\cdot$}
\put(8.5,22.11){$\cdot$} \put(8,22.87){$\cdot$}
\put(7.5,23.66){$\cdot$} \put(7,24.48){$\cdot$}
\put(6.5,25.34){$\cdot$} \put(6,26.25){$\cdot$}
\put(5.5,27.20){$\cdot$} \put(5,28.20){$\cdot$}
%quadrants I-II
\put(13,97){$\R\oplus\R$}\put(-55,105){$p-q\equiv
1\!\!\!\!\pmod{8}$} \put(50,93){0} \put(50,100){$\cdot$}
\put(49.5,99.99){$\cdot$} \put(49,99.98){$\cdot$}
\put(48.5,99.97){$\cdot$} \put(48,99.96){$\cdot$}
\put(47.5,99.94){$\cdot$} \put(47,99.91){$\cdot$}
\put(46.5,99.86){$\cdot$} \put(46,99.84){$\cdot$}
\put(45.5,99.8){$\cdot$} \put(45,99.75){$\cdot$}
\put(44.5,99.7){$\cdot$} \put(44,99.64){$\cdot$}
\put(43.5,99.57){$\cdot$} \put(43,99.51){$\cdot$}
\put(42.5,99.43){$\cdot$} \put(42,99.35){$\cdot$}
\put(41.5,99.27){$\cdot$} \put(41,99.18){$\cdot$}
\put(40.5,99.09){$\cdot$} \put(40,98.99){$\cdot$}
\put(39.5,98.88){$\cdot$} \put(39,98.77){$\cdot$}
\put(38.5,98.66){$\cdot$} \put(38,98.54){$\cdot$}
\put(37.5,98.41){$\cdot$} \put(37,98.28){$\cdot$}
\put(50.5,99.99){$\cdot$} \put(51,99.98){$\cdot$}
\put(51.5,99.97){$\cdot$} \put(52,99.96){$\cdot$}
\put(52.5,99.94){$\cdot$} \put(53,99.91){$\cdot$}
\put(53.5,99.86){$\cdot$} \put(54,99.84){$\cdot$}
\put(54.5,99.8){$\cdot$} \put(55,99.75){$\cdot$}
\put(55.5,99.7){$\cdot$} \put(56,99.64){$\cdot$}
\put(56.5,99.57){$\cdot$} \put(57,99.51){$\cdot$}
\put(57.5,99.43){$\cdot$} \put(58,99.35){$\cdot$}
\put(58.5,99.27){$\cdot$} \put(59,99.18){$\cdot$}
\put(59.5,99.09){$\cdot$} \put(60,98.99){$\cdot$}
\put(60.5,98.88){$\cdot$} \put(61,98.77){$\cdot$}
\put(61.5,98.66){$\cdot$} \put(62,98.54){$\cdot$}
\put(62.5,98.41){$\cdot$} \put(63,98.28){$\cdot$}
\put(68,97){$\R$}\put(73,105){$p-q\equiv 0\!\!\!\!\pmod{8}$}
%quadrants III-IV
\put(50,7){4} \put(68,2){$\BH\oplus\BH$}\put(90,-4){$p-q\equiv
5\!\!\!\!\pmod{8}$} \put(50,0){$\cdot$} \put(50.5,0){$\cdot$}
\put(51,0.01){$\cdot$} \put(51.5,0.02){$\cdot$}
\put(52,0.04){$\cdot$} \put(52.5,0.06){$\cdot$}
\put(53,0.09){$\cdot$} \put(53.5,0.12){$\cdot$}
\put(54,0.16){$\cdot$} \put(54.5,0.2){$\cdot$}
\put(55,0.25){$\cdot$} \put(55.5,0.3){$\cdot$}
\put(56,0.36){$\cdot$} \put(56.5,0.42){$\cdot$}
\put(57,0.49){$\cdot$} \put(57.5,0.56){$\cdot$}
\put(58,0.64){$\cdot$} \put(58.5,0.73){$\cdot$}
\put(59,0.82){$\cdot$} \put(59.5,0.91){$\cdot$}
\put(60,1.01){$\cdot$} \put(60.5,1.11){$\cdot$}
\put(61,1.22){$\cdot$} \put(61.5,1.34){$\cdot$}
\put(62,1.46){$\cdot$} \put(62.5,1.59){$\cdot$}
\put(63,1.72){$\cdot$} \put(49.5,0){$\cdot$} \put(49,0.01){$\cdot$}
\put(48.5,0.02){$\cdot$} \put(48,0.04){$\cdot$}
\put(47.5,0.06){$\cdot$} \put(47,0.09){$\cdot$}
\put(46.5,0.12){$\cdot$} \put(46,0.16){$\cdot$}
\put(45.5,0.2){$\cdot$} \put(45,0.25){$\cdot$}
\put(44.5,0.3){$\cdot$} \put(44,0.36){$\cdot$}
\put(43.5,0.42){$\cdot$} \put(43,0.49){$\cdot$}
\put(42.5,0.56){$\cdot$} \put(42,0.64){$\cdot$}
\put(41.5,0.73){$\cdot$} \put(41,0.82){$\cdot$}
\put(40.5,0.91){$\cdot$} \put(40,1.01){$\cdot$}
\put(39.5,1.11){$\cdot$} \put(39,1.22){$\cdot$}
\put(38.5,1.34){$\cdot$} \put(38,1.46){$\cdot$}
\put(37.5,1.59){$\cdot$} \put(37,1.72){$\cdot$}
\put(28,3){$\BH$}\put(-40,-4){$p-q\equiv 4\!\!\!\!\pmod{8}$}
%quadrants II-IV
\put(93,50){2} \put(98.28,63){$\cdot$} \put(98.41,62.5){$\cdot$}
\put(98.54,62){$\cdot$} \put(98.66,61.5){$\cdot$}
\put(98.77,61){$\cdot$} \put(98.88,60.5){$\cdot$}
\put(98.99,60){$\cdot$} \put(99.09,59.5){$\cdot$}
\put(99.18,59){$\cdot$} \put(99.27,58.5){$\cdot$}
\put(99.35,58){$\cdot$} \put(99.43,57.5){$\cdot$}
\put(99.51,57){$\cdot$} \put(99.57,56.5){$\cdot$}
\put(99.64,56){$\cdot$} \put(99.7,55.5){$\cdot$}
\put(99.75,55){$\cdot$} \put(99.8,54.5){$\cdot$}
\put(99.84,54){$\cdot$} \put(99.86,53.5){$\cdot$}
\put(99.91,53){$\cdot$} \put(99.94,52.5){$\cdot$}
\put(99.96,52){$\cdot$} \put(99.97,51.5){$\cdot$}
\put(99.98,51){$\cdot$} \put(99.99,50.5){$\cdot$}
\put(100,50){$\cdot$} \put(98.28,37){$\cdot$}
\put(98.41,37.5){$\cdot$} \put(98.54,38){$\cdot$}
\put(98.66,38.5){$\cdot$} \put(98.77,39){$\cdot$}
\put(98.88,39.5){$\cdot$} \put(98.99,40){$\cdot$}
\put(99.09,40.5){$\cdot$} \put(99.18,41){$\cdot$}
\put(99.27,41.5){$\cdot$} \put(99.35,42){$\cdot$}
\put(99.43,42.5){$\cdot$} \put(99.51,43){$\cdot$}
\put(99.57,43.5){$\cdot$} \put(99.64,44){$\cdot$}
\put(99.7,44.5){$\cdot$} \put(99.75,45){$\cdot$}
\put(99.8,45.5){$\cdot$} \put(99.84,46){$\cdot$}
\put(99.86,46.5){$\cdot$} \put(99.91,47){$\cdot$}
\put(99.94,47.5){$\cdot$} \put(99.96,48){$\cdot$}
\put(99.97,48.5){$\cdot$} \put(99.98,49){$\cdot$}
\put(99.99,49.5){$\cdot$}
%quadrants III-I
\put(7,50){6} \put(1,32){$\C$}\put(-67,29){$p-q\equiv
3\!\!\!\!\pmod{8}$} \put(1.72,63){$\cdot$} \put(1.59,62.5){$\cdot$}
\put(1.46,62){$\cdot$} \put(1.34,61.5){$\cdot$}
\put(1.22,61){$\cdot$} \put(1.11,60.5){$\cdot$}
\put(1.01,60){$\cdot$} \put(0.99,59.5){$\cdot$}
\put(0.82,59){$\cdot$} \put(0.73,58.5){$\cdot$}
\put(0.64,58){$\cdot$} \put(0.56,57.5){$\cdot$}
\put(0.49,57){$\cdot$} \put(0.42,56.5){$\cdot$}
\put(0.36,56){$\cdot$} \put(0.3,55.5){$\cdot$}
\put(0.25,55){$\cdot$} \put(0.2,54.5){$\cdot$}
\put(0.16,54){$\cdot$} \put(0.12,53.5){$\cdot$}
\put(0.09,53){$\cdot$} \put(0.06,52.5){$\cdot$}
\put(0.04,52){$\cdot$} \put(0.02,51.5){$\cdot$}
\put(0.01,51){$\cdot$} \put(0,50.5){$\cdot$} \put(0,50){$\cdot$}
\put(1.72,37){$\cdot$} \put(1.59,37.5){$\cdot$}
\put(1.46,38){$\cdot$} \put(1.34,38.5){$\cdot$}
\put(1.22,39){$\cdot$} \put(1.11,39.5){$\cdot$}
\put(1.01,40){$\cdot$} \put(0.99,40.5){$\cdot$}
\put(0.82,41){$\cdot$} \put(0.73,41.5){$\cdot$}
\put(0.64,42){$\cdot$} \put(0.56,42.5){$\cdot$}
\put(0.49,43){$\cdot$} \put(0.42,43.5){$\cdot$}
\put(0.36,44){$\cdot$} \put(0.3,44.5){$\cdot$}
\put(0.25,45){$\cdot$} \put(0.2,45.5){$\cdot$}
\put(0.16,46){$\cdot$} \put(0.12,46.5){$\cdot$}
\put(0.09,47){$\cdot$} \put(0.06,47.5){$\cdot$}
\put(0.04,48){$\cdot$} \put(0.02,48.5){$\cdot$}
\put(0.01,49){$\cdot$} \put(0,49.5){$\cdot$}
\put(0.5,67){$\R$}\put(-67,75){$p-q\equiv 2\!\!\!\!\pmod{8}$}
\end{picture}
\]
\vspace{2ex}
\begin{center}
\begin{minipage}{25pc}{\small
{\bf Fig.1} The Trautman diagram for the Brauer-Wall group
$BW_{\R}$.}
\end{minipage}
\end{center}
\end{figure}
\medskip
It is obvious that a group structure over $\cl_{p,q}$, defined by
$BW_{\R}$, is related with the Atiyah-Bott-Shapiro periodicity
\cite{AtBSh}. In accordance with \cite{AtBSh}, the Clifford algebra
over the field $\F=\R$ is modulo 8 periodic\index{periodicity!modulo
8}:
$\cl_{p+8,q}\simeq\cl_{p,q}\otimes\cl_{8,0}\,(\cl_{p,q+8}\simeq\cl_{p,q}
\otimes\cl_{0,8})$.

Further, over the field $\F=\C$, there exist two types of the
complex Clifford algebras: $\C_n$ and
$\C_{n+1}\simeq\C_n\oplus\C_n$. Therefore, an action of $BW_{\C}$,
defined on the set of these two types, is formally equivalent to the
action of cyclic group $\dZ_2$. The cyclic structure of the
supergroup $BW_{\C}=G(\C_n,\gamma,\bcirc)$ may be represented on the
following Trautman diagram (Fig. 2) by means of a transition
$\C^+_n\stackrel{h}{\longrightarrow}\C_n$ (the round on the diagram
is realized by an hour-hand). At this point, the type of the algebra
on the diagram is defined by an equality $n=h+2r$, where
$h\in\{0,1\}$, $r\in\dZ$.
\begin{figure}[t]
\[
\unitlength=0.5mm
%\linethickness{0.4pt}
\begin{picture}(50.00,60.00)(0,0)
\put(5,25){1} \put(42,25){0} \put(22,-4){$\C_{{\rm even}}$}
\put(22,55){$\C_{{\rm odd}}$} \put(3,-13){$n\equiv
0\!\!\!\!\pmod{2}$} \put(3,64){$n\equiv 1\!\!\!\!\pmod{2}$}
%1
\put(20,49.49){$\cdot$} \put(19.5,49.39){$\cdot$}
\put(19,49.27){$\cdot$} \put(18.5,49.14){$\cdot$}
\put(18,49){$\cdot$} \put(17.5,48.85){$\cdot$}
\put(17,48.68){$\cdot$} \put(16.5,48.51){$\cdot$}
\put(16,48.32){$\cdot$} \put(15.5,48.12){$\cdot$}
\put(15,47.91){$\cdot$} \put(14.5,47.69){$\cdot$}
\put(14,47.45){$\cdot$} \put(13.5,47.2){$\cdot$}
\put(13,46.93){$\cdot$} \put(12.5,46.65){$\cdot$}
\put(12,46.35){$\cdot$} \put(11.5,46.04){$\cdot$}
\put(11,45.71){$\cdot$} \put(10.5,45.36){$\cdot$}
\put(10,45){$\cdot$} \put(9.5,44.61){$\cdot$} \put(9,44.21){$\cdot$}
\put(8.5,43.78){$\cdot$} \put(8,43.33){$\cdot$}
\put(7.5,42.85){$\cdot$} \put(7,42.35){$\cdot$}
\put(6.5,41.81){$\cdot$} \put(6,41.25){$\cdot$}
\put(5.5,40.64){$\cdot$} \put(5,40){$\cdot$} \put(4.5,39.3){$\cdot$}
\put(4,38.56){$\cdot$} \put(3.5,37.76){$\cdot$}
\put(3,36.87){$\cdot$} \put(2.5,35.89){$\cdot$}
\put(2,34.79){$\cdot$} \put(1.5,33.53){$\cdot$} \put(1,32){$\cdot$}
\put(0.5,29.97){$\cdot$}
%2
\put(30,49.49){$\cdot$} \put(30.5,49.39){$\cdot$}
\put(31,49.27){$\cdot$} \put(31.5,49.14){$\cdot$}
\put(32,49){$\cdot$} \put(32.5,48.85){$\cdot$}
\put(33,48.68){$\cdot$} \put(33.5,48.51){$\cdot$}
\put(34,48.32){$\cdot$} \put(34.5,48.12){$\cdot$}
\put(35,47.91){$\cdot$} \put(35.5,47.69){$\cdot$}
\put(36,47.45){$\cdot$} \put(36.5,47.2){$\cdot$}
\put(37,46.93){$\cdot$} \put(37.5,46.65){$\cdot$}
\put(38,46.35){$\cdot$} \put(38.5,46.04){$\cdot$}
\put(39,45.71){$\cdot$} \put(39.5,45.36){$\cdot$}
\put(40,45){$\cdot$} \put(40.5,44.61){$\cdot$}
\put(41,44.21){$\cdot$} \put(41.5,43.78){$\cdot$}
\put(42,43.33){$\cdot$} \put(42.5,42.85){$\cdot$}
\put(43,42.35){$\cdot$} \put(43.5,41.81){$\cdot$}
\put(44,41.25){$\cdot$} \put(44.5,40.64){$\cdot$}
\put(45,40){$\cdot$} \put(45.5,39.3){$\cdot$}
\put(46,38.56){$\cdot$} \put(46.5,37.76){$\cdot$}
\put(47,36.87){$\cdot$} \put(47.5,35.89){$\cdot$}
\put(48,34.79){$\cdot$} \put(48.5,33.53){$\cdot$}
\put(49,32){$\cdot$} \put(49.5,29.97){$\cdot$}
%1'
\put(0,25){$\cdot$} \put(0,24.5){$\cdot$} \put(0.02,24){$\cdot$}
\put(0.04,23.5){$\cdot$} \put(0.08,23){$\cdot$}
\put(0.12,22.5){$\cdot$} \put(0.18,22){$\cdot$}
\put(0.25,21.5){$\cdot$} \put(0.32,21){$\cdot$}
\put(0.4,20.5){$\cdot$} \put(0.5,20){$\cdot$}
\put(0.61,19.5){$\cdot$} \put(0.73,19){$\cdot$}
\put(0.85,18.5){$\cdot$} \put(1,18){$\cdot$}
\put(1.15,17.5){$\cdot$} \put(1.31,17){$\cdot$}
\put(1.49,16.5){$\cdot$} \put(1.68,16){$\cdot$}
\put(1.88,15.5){$\cdot$} \put(2.09,15){$\cdot$}
\put(2.31,14.5){$\cdot$} \put(2.55,14){$\cdot$}
\put(2.8,13.5){$\cdot$} \put(3.06,13){$\cdot$} \put(0,25.5){$\cdot$}
\put(0.02,26){$\cdot$} \put(0.04,26.5){$\cdot$}
\put(0.08,27){$\cdot$} \put(0.12,27.5){$\cdot$}
\put(0.18,28){$\cdot$} \put(0.25,28.5){$\cdot$}
\put(0.32,29){$\cdot$} \put(0.4,29.5){$\cdot$} \put(0.5,30){$\cdot$}
\put(0.61,30.5){$\cdot$} \put(0.73,31){$\cdot$}
\put(0.85,31.5){$\cdot$} \put(1,32){$\cdot$}
\put(1.15,32.5){$\cdot$} \put(1.31,33){$\cdot$}
\put(1.49,33.5){$\cdot$} \put(1.68,34){$\cdot$}
\put(1.88,34.5){$\cdot$} \put(2.09,35){$\cdot$}
\put(2.31,35.5){$\cdot$} \put(2.55,36){$\cdot$}
\put(2.8,36.5){$\cdot$} \put(3.06,37){$\cdot$}
%2'
\put(50,25){$\cdot$} \put(49.99,24.5){$\cdot$}
\put(49.98,24){$\cdot$} \put(49.95,23.5){$\cdot$}
\put(49.92,23){$\cdot$} \put(49.87,22.5){$\cdot$}
\put(49.82,22){$\cdot$} \put(49.75,21.5){$\cdot$}
\put(49.68,21){$\cdot$} \put(49.51,20.5){$\cdot$}
\put(49.49,20){$\cdot$} \put(49.39,19.5){$\cdot$}
\put(49.27,19){$\cdot$} \put(49.14,18.5){$\cdot$}
\put(49,18){$\cdot$} \put(48.85,17.5){$\cdot$}
\put(48.69,17){$\cdot$} \put(48.51,16.5){$\cdot$}
\put(48.32,16){$\cdot$} \put(48.12,15.5){$\cdot$}
\put(47.91,15){$\cdot$} \put(47.69,14.5){$\cdot$}
\put(47.45,14){$\cdot$} \put(47.2,13.5){$\cdot$}
\put(46.93,13){$\cdot$} \put(50,25){$\cdot$}
\put(49.99,25.5){$\cdot$} \put(49.98,26){$\cdot$}
\put(49.95,26.5){$\cdot$} \put(49.92,27){$\cdot$}
\put(49.87,27.5){$\cdot$} \put(49.82,28){$\cdot$}
\put(49.75,28.5){$\cdot$} \put(49.68,29){$\cdot$}
\put(49.51,29.5){$\cdot$} \put(49.49,30){$\cdot$}
\put(49.39,30.5){$\cdot$} \put(49.27,31){$\cdot$}
\put(49.14,31.5){$\cdot$} \put(49,32){$\cdot$}
\put(48.85,32.5){$\cdot$} \put(48.69,33){$\cdot$}
\put(48.51,33.5){$\cdot$} \put(48.32,34){$\cdot$}
\put(48.12,34.5){$\cdot$} \put(47.91,35){$\cdot$}
\put(47.69,35.5){$\cdot$} \put(47.45,36){$\cdot$}
\put(47.2,36.5){$\cdot$} \put(46.93,37){$\cdot$}
%4
\put(20,0.5){$\cdot$} \put(19.5,0.61){$\cdot$}
\put(19,0.73){$\cdot$} \put(18.5,0.86){$\cdot$} \put(18,1){$\cdot$}
\put(17.5,1.15){$\cdot$} \put(17,1.31){$\cdot$}
\put(16.5,1.49){$\cdot$} \put(16,1.68){$\cdot$}
\put(15.5,1.87){$\cdot$} \put(15,2.09){$\cdot$}
\put(14.5,2.31){$\cdot$} \put(14,2.55){$\cdot$}
\put(13.5,2.8){$\cdot$} \put(13,3.06){$\cdot$}
\put(12.5,3.35){$\cdot$} \put(12,3.64){$\cdot$}
\put(11.5,3.96){$\cdot$} \put(11,4.29){$\cdot$}
\put(10.5,4.63){$\cdot$} \put(10,5){$\cdot$} \put(9.5,5.38){$\cdot$}
\put(9,5.79){$\cdot$} \put(8.5,6.22){$\cdot$} \put(8,6.67){$\cdot$}
\put(7.5,7.15){$\cdot$} \put(7,7.65){$\cdot$}
\put(6.5,8.18){$\cdot$} \put(6,8.75){$\cdot$}
\put(5.5,9.35){$\cdot$} \put(5,10){$\cdot$} \put(4.5,10.69){$\cdot$}
\put(4,11.43){$\cdot$} \put(3.5,12.24){$\cdot$}
\put(3,13.12){$\cdot$} \put(2.5,14.10){$\cdot$}
\put(2,15.20){$\cdot$} \put(1.5,16.47){$\cdot$} \put(1,18){$\cdot$}
\put(0.5,20.02){$\cdot$}
%3'
%\put(30,0.5){\vector(0,0){1}}
\put(30,0.5){$\cdot$} \put(30.5,0.61){$\cdot$}
\put(31,0.73){$\cdot$} \put(31.5,0.86){$\cdot$} \put(32,1){$\cdot$}
\put(32.5,1.15){$\cdot$} \put(33,1.31){$\cdot$}
\put(33.5,1.49){$\cdot$} \put(34,1.68){$\cdot$}
\put(34.5,1.87){$\cdot$} \put(35,2.09){$\cdot$}
\put(35.5,2.31){$\cdot$} \put(36,2.55){$\cdot$}
\put(36.5,2.8){$\cdot$} \put(37,3.06){$\cdot$}
\put(37.5,3.35){$\cdot$} \put(38,3.64){$\cdot$}
\put(38.5,3.96){$\cdot$} \put(39,4.29){$\cdot$}
\put(39.5,4.63){$\cdot$} \put(40,5){$\cdot$}
\put(40.5,5.38){$\cdot$} \put(41,5.79){$\cdot$}
\put(41.5,6.22){$\cdot$} \put(42,6.67){$\cdot$}
\put(42.5,7.15){$\cdot$} \put(43,7.65){$\cdot$}
\put(43.5,8.18){$\cdot$} \put(44,8.75){$\cdot$}
\put(44.5,9.35){$\cdot$} \put(45,10){$\cdot$}
\put(45.5,10.69){$\cdot$} \put(46,11.43){$\cdot$}
\put(46.5,12.24){$\cdot$} \put(47,13.12){$\cdot$}
\put(47.5,14.10){$\cdot$} \put(48,15.20){$\cdot$}
\put(48.5,16.47){$\cdot$} \put(49,18){$\cdot$}
\put(49.5,20.02){$\cdot$}

\end{picture}
\]
\vspace{2ex}
\begin{center}
\begin{minipage}{25pc}{\small
{\bf Fig.2} The Trautman diagram for the Brauer-Wall group
$BW_{\C}$.}
\end{minipage}
\end{center}
\end{figure}
\medskip
It is obvious that a group structure over $\C_n$, defined by the
group $BW_{\C}$, is related immediately with a modulo 2 periodicity
of the complex Clifford algebras \cite{AtBSh,Kar79}:
$\C_{n+2}\simeq\C_n\otimes\C_2$.
\subsection{Superalgebras}
As in the case of classical theory of groups and algebras, the
supergroup $G(\fA,\gamma,\bcirc)$ corresponds to a superalgebra (or
$\dZ_2$-graded Lie algebra). $\dZ_2$-graded linear space over the
field $\F$, endowed with a bilinear operation $\ld X_1,X_2\rd$, is
called a \emph{superalgebra}. At this point, the operation $\ld
X_1,X_2\rd$ satisfies the following conditions:
\begin{gather}
\ld X_1,X_2\rd=-(-1)^{\deg(X_1)\deg(X_2)}\ld X_2,X_1\rd,\nonumber\\
(-1)^{\deg(X_1)\deg(X_3)}\ld X_1,\ld X_2,X_3\rd\rd+
(-1)^{\deg(X_2)\deg(X_1)}\ld X_2,\ld X_3,X_1\rd\rd+\nonumber\\
+(-1)^{\deg(X_3)\deg(X_2)}\ld X_3,\ld X_1,X_2\rd\rd=0,\nonumber\\
\deg\left(\ld X_1,X_2\rd\right)=\deg(X_1)+\deg(X_2).\nonumber
\end{gather}
\section{Clifford algebras and representations of $\spin_+(1,3)$}
Let us consider the field
\begin{equation}\label{FieldL}
\boldsymbol{\psi}(\balpha)=\langle x,\fg\,|\boldsymbol{\psi}\rangle,
\end{equation}
where $x\in T_4$, $\fg\in\spin_+(1,3)$. The spinor group
$\spin_+(1,3)\simeq\SU(2)\otimes\SU(2)$ is a universal covering of
the proper orthochronous Lorentz group $\SO_0(1,3)$. The parameters
$x\in T_4$ and $\fg\in\spin_+(1,3)$ describe position and
orientation of the extended object defined by the field
(\ref{FieldL}) (the field on the Poincar\'{e} group).
\begin{thm}\label{tinf}
\begin{sloppypar}\noindent
Let $\spin_+(1,3)$ be the universal covering of the proper
orthochronous Lorentz group $\SO_0(1,3)$. Then over $\F=\C$ the
field $(l,0)\oplus(0,\dot{l})$ is constructed in the framework of a
complex finite-dimensional representation
$\fC^{l_0+l_1-1,0}\oplus\fC^{0,l_0-l_1+1}$ of $\spin_+(1,3)$ defined
on the spinspace $\dS_{2^k}\otimes\dS_{2^r}$ with the algebra
\end{sloppypar}
\[
\underbrace{\C_2\otimes\C_2\otimes\cdots\otimes\C_2}_{k\;\text{times}}\bigoplus
\underbrace{\overset{\ast}{\C}_2\otimes\overset{\ast}{\C}_2\otimes\cdots\otimes
\overset{\ast}{\C}_2}_{r\;\text{times}},
\]
where $(l_0,l_1)=\left(\frac{k}{2},\frac{k}{2}+1\right)$,
$(-l_0,l_1)= \left(-\frac{r}{2},\frac{r}{2}+1\right)$. In turn, the
field
$(l^\prime,l^{\prime\prime})\oplus(\dot{l}^{\prime\prime},\dot{l}^\prime)$
is constructed in the framework of a representation
$\fC^{l_0+l_1-1,l_0-l_1+1}\oplus\fC^{l_0-l_1+1,l_0+l_1-1}$ of
$\spin_+(1,3)$ defined on the spinspace
$\dS_{2^{k+r}}\oplus\dS_{2^{k+r}}$ with the algebra
\[
\underbrace{\C_2\otimes\C_2\otimes\cdots\otimes\C_2\bigotimes
\overset{\ast}{\C}_2\otimes\overset{\ast}{\C}_2\otimes\cdots\otimes
\overset{\ast}{\C}_2}_{k+r\;\text{times}}\bigoplus
\underbrace{\overset{\ast}{\C}_2\otimes\overset{\ast}{\C}_2\otimes\cdots\otimes
\overset{\ast}{\C}_2\bigotimes\C_2\otimes\C_2\otimes\cdots\otimes\C_2}_{r+k\;\text{times}},
\]
where $(l_0,l_1)=\left(\frac{k-r}{2}, \frac{k+r}{2}+1\right)$. In
the case of $\F=\R$ the field $\boldsymbol{\psi}(\balpha)$ of type
$(l_0,0)$ is constructed within real representations
$\fM^+=\{\fR^{l_0}_0,\fR^{l_0}_2,\fH^{l_0}_4,\fH^{l_0}_6\}$ of
$\spin_+(1,3)$ defined on the spinspace $\dS_{2^r}$ with the algebra
\[
\cl_{p,q}\simeq\underbrace{\cl_{s_i,t_j}\otimes\cl_{s_i,t_j}\otimes\cdots
\otimes\cl_{s_i,t_j}}_{r\;\text{times}},
\]
where $s_i,t_j\in\{0,1,2\}$. In turn, the field
$\boldsymbol{\psi}(\balpha)$ of type $(l_0,0)\cup(0,l_0)$ is
constructed within representations
$\fM^-=\{\fC^{l_0}_{3,7},\fR^{l_0}_{0,2}\cup\fR^{l_0}_{0,2},\fH^{l_0}_{4,6}\cup\fH^{l_0}_{4,6}\}$
of $\spin_+(1,3)$ defined on the double spinspace
$\dS_{2^r}\cup\dS_{2^r}$ with the algebra
\[
\cl_{p,q}\simeq\underbrace{\cl_{s_i,t_j}\otimes\cl_{s_i,t_j}\otimes\cdots
\otimes\cl_{s_i,t_j}}_{r\;\text{times}}\bigcup\underbrace{\cl_{s_i,t_j}\otimes\cl_{s_i,t_j}\otimes\cdots
\otimes\cl_{s_i,t_j}}_{r\;\text{times}}.
\]
\end{thm}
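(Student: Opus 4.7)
\bigskip

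\noindent\textbf{Proof plan.} The strategy is to exploit the double covering $\spin_+(1,3)\simeq\SU(2)\otimes\SU(2)$ together with the tensor-product factorizations of Clifford algebras supplied by Chevalley's theorem and Karoubi's theorem, so as to realize each irreducible representation space as a (possibly doubled) minimal left ideal $I_{p,q}=\cl_{p,q}f$ of an explicit iterated tensor product of $\cl_2$'s or $\C_2$'s.

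First, over $\F=\C$, every finite-dimensional irreducible representation of $\spin_+(1,3)$ has the form $\tau^{j_1,j_2}$ with half-integer spins $j_1=k/2$, $j_2=r/2$, which in Gel'fand-Naimark parametrization reads $(l_0,l_1)=\bigl(\tfrac{k-r}{2},\tfrac{k+r}{2}+1\bigr)$. The spin-$k/2$ undotted factor is realized irreducibly inside the spinspace $\dS_{2^k}$ of $\C_{2k}$, and the isomorphism (\ref{3}), i.e.\ $\C_{2k}\simeq\C_2^{\hat{\otimes}k}$, exhibits this algebra as the $k$-fold graded tensor product appearing in the statement. The dotted factor of spin $r/2$ is treated the same way using the complex-conjugate copy $\overset{\ast}{\C}_2$, giving $\dS_{2^r}$. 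Specializing to $r=0$ (respectively $k=0$) in each summand yields the direct sum $(l,0)\oplus(0,\dot l)$; taking both $k$ and $r$ nonzero and tensoring the undotted and dotted blocks yields $(l',l'')\oplus(\dot l'',\dot l')$ on $\dS_{2^{k+r}}\oplus\dS_{2^{k+r}}$. The second summand is obtained from the first by the complex-conjugation (equivalently, by the antiautomorphism $\cA\to\widetilde{\cA^{\star}}$), which swaps the roles of $\C_2$ and $\overset{\ast}{\C}_2$.

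Second, over $\F=\R$, I would classify the relevant $\cl_{p,q}$ according to the Budinich-Trautman Table~1 and the Brauer-Wall group $BW_{\R}$. Decomposing $p$ and $q$ in the form $p=\sum s_i$, $q=\sum t_j$ with $s_i,t_j\in\{0,1,2\}$ and applying Chevalley's theorem iteratively, every real Clifford algebra factors as $\cl_{p,q}\simeq\cl_{s_i,t_j}^{\hat\otimes r}$ where the building blocks $\cl_{s_i,t_j}$ have dimension at most $4$. For $\fM^+=\{\fR^{l_0}_0,\fR^{l_0}_2,\fH^{l_0}_4,\fH^{l_0}_6\}$ the algebra $\cl_{p,q}$ is central simple with $p-q\equiv 0,2,4,6\pmod 8$ (division rings $\R$ or $\BH$), so (\ref{Simple}) gives a faithful irreducible action on the single spinspace $\dS_{2^r}$. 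For $\fM^-$, the algebra is either semi-simple ($p-q\equiv 1,5\pmod 8$, division rings $\R\oplus\R$ or $\BH\oplus\BH$) or central simple with complex division ring ($p-q\equiv 3,7\pmod 8$): in both situations (\ref{Semi-Simple}), together with the $\cA\to\widetilde{\cA^\star}$ pairing, forces a doubling to $\dS_{2^r}\cup\dS_{2^r}$, which represents exactly the charge-conjugation doublet $(l_0,0)\cup(0,l_0)$.

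The main obstacle will be the bookkeeping that ties the Gel'fand-Naimark weight pair $(l_0,l_1)$ on the representation-theoretic side to the tensor-factor counts $k,r$ on the algebra side, and, in the real case, to the position $h=q-p\pmod 8$ on the Trautman spinorial clock. Once that correspondence is fixed, each claim of the theorem reduces to reading off the division ring and the simple/semi-simple dichotomy from Table~1, applying Chevalley's theorem to split the algebra into its two-dimensional building blocks, and invoking (\ref{Simple}) or (\ref{Semi-Simple}) to identify the (possibly doubled) spinspace as the carrier of the claimed representation $\fR^{l_0}_\ast$, $\fC^{l_0}_\ast$ or $\fH^{l_0}_\ast$.
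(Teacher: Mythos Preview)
Your plan is essentially the same route the paper takes---tensor factorization of $\C_{2k}$ and $\cl_{p,q}$ into two-dimensional blocks via Chevalley/Karoubi, identification of the minimal left ideals as the spinspaces carrying the spintensor representations, and the modulo~8 division-ring classification for the real case. Two points deserve correction.

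First, the conceptual step that opens the real part is missing from your outline. You jump straight to ``classify the relevant $\cl_{p,q}$ according to Table~1,'' but the paper first explains \emph{why} real representations correspond to the pair $(l_0,0)$ and why this forces the passage from $\C_n$ to a real subalgebra. The argument is: an irreducible representation of $\SO_0(1,3)$ labelled $(l_0,l_1)$ has conjugate labelled $\pm(l_0,-l_1)$, so self-conjugacy requires $l_1=0$; on the Clifford side this translates to $\C_n=\overset{\ast}{\C}_n$, which can hold only when $\C_n=\C\otimes\cl_{p,q}$ restricts to its real subalgebra $\cl_{p,q}$. Without this bridge, the identification $l_0=(p+q)/4$ and the subsequent $\fM^+/\fM^-$ split are unmotivated.

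Second, two technical details: (i) the paper obtains $\overset{\ast}{\C}_2$ from $\C_2$ by the \emph{reversion} $\cA\mapsto\widetilde{\cA}$ alone (using $\C_2\simeq\cl_{3,0}$ and $\omega=\e_{123}\equiv i$), whereas the involution $\star$ gives $-\overset{\ast}{\C}_2$; your choice of the conjugation $\widetilde{\cA^{\star}}$ is the composite and will introduce an unwanted sign. (ii) For $p-q\equiv 3,7\pmod 8$ the algebra $\cl_{p,q}$ is \emph{simple} with $\K\simeq\C$, not semi-simple, so (\ref{Semi-Simple}) does not apply there; the doubling in $\fM^-$ comes instead from the odd-dimensional factorization $\cl_{p,q}\simeq\cl_{p,q-1}\cup\cl_{p,q-1}$ (or its variants), which the paper invokes uniformly for all $p+q\equiv 1\pmod 2$.
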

\begin{proof}1) Complex representations.\\
Since the spacetime algebra $\cl_{1,3}$ is the simple algebra, then
the map (\ref{Simple}) gives an irreducible representation of
$\cl_{1,3}$ in the spinspace $\dS_2(\BH)$. In turn, representations
of the group $\spin_+(1,3)\in\cl^+_{1,3}\simeq\cl_{3,0}$ are defined
in the spinspace $\dS_2(\C)$.
\begin{sloppypar} Let us consider now spintensor representations of the
group $\fG_+\simeq\SL(2;\C)$ which, as is known, form the base of
all the finite-dimensional representations of the Lorentz group, and
also we consider their relationship with the complex Clifford
algebras. From each complex Clifford algebra
$\C_n=\C\otimes\cl_{p,q}\; (n=p+q)$ we obtain the spinspace
$\dS_{2^{n/2}}$ which is a complexification of the minimal left
ideal of the algebra $\cl_{p,q}$: $\dS_{2^{n/2}}=\C\otimes
I_{p,q}=\C\otimes\cl_{p,q} f_{pq}$, where $f_{pq}$ is the primitive
idempotent of the algebra $\cl_{p,q}$. Further, a spinspace related
with the Pauli algebra $\C_2$ has the form $\dS_2=\C\otimes
I_{2,0}=\C\otimes\cl_{2,0}f_{20}$ or $\dS_2=\C\otimes
I_{1,1}=\C\otimes\cl_{1,1}f_{11}(\C\otimes I_{0,2}=
\C\otimes\cl_{0,2}f_{02})$. Therefore, the tensor product of the $k$
algebras $\C_2$ induces a tensor product of the $k$ spinspaces
$\dS_2$:\end{sloppypar}
%\newpage
\[
\dS_2\otimes\dS_2\otimes\cdots\otimes\dS_2=\dS_{2^k}.
\]
Vectors of the spinspace $\dS_{2^k}$ (or elements of the minimal
left ideal of $\C_{2k}$) are spintensors of the following form:
\begin{equation}\label{6.16}
\boldsymbol{s}^{\alpha_1\alpha_2\cdots\alpha_k}=\sum
\boldsymbol{s}^{\alpha_1}\otimes
\boldsymbol{s}^{\alpha_2}\otimes\cdots\otimes
\boldsymbol{s}^{\alpha_k},
\end{equation}
where summation is produced on all the index collections
$(\alpha_1\ldots\alpha_k)$, $\alpha_i=1,2$. For the each spinor
$\boldsymbol{s}^{\alpha_i}$ from (\ref{6.16}) we have ${}^\prime
\boldsymbol{s}^{\alpha^\prime_i}=
\sigma^{\alpha^\prime_i}_{\alpha_i}\boldsymbol{s}^{\alpha_i}$.
Therefore, in general case we obtain
\begin{equation}\label{6.17}
{}^\prime
\boldsymbol{s}^{\alpha^\prime_1\alpha^\prime_2\cdots\alpha^\prime_k}=\sum
\sigma^{\alpha^\prime_1}_{\alpha_1}\sigma^{\alpha^\prime_2}_{\alpha_2}\cdots
\sigma^{\alpha^\prime_k}_{\alpha_k}\boldsymbol{s}^{\alpha_1\alpha_2\cdots\alpha_k}.
\end{equation}
A representation (\ref{6.17}) is called {\it undotted spintensor
representation of the proper Lorentz group of the rank $k$}.

Further, let $\overset{\ast}{\C}_2$ be the Pauli algebra with the
coefficients which are complex conjugate to the coefficients of
$\C_2$. Let us show that the algebra $\overset{\ast}{\C}_2$ is
derived from $\C_2$ under action of the automorphism
$\cA\rightarrow\cA^\star$ or antiautomorphism
$\cA\rightarrow\widetilde{\cA}$ (see the definitions (\ref{auto16})
and (\ref{auto19})). Indeed, in virtue of an isomorphism
$\C_2\simeq\cl_{3,0}$ a general element
\[
\cA=a^0\e_0+\sum^3_{i=1}a^i\e_i+\sum^3_{i=1}\sum^3_{j=1}a^{ij}\e_{ij}+
a^{123}\e_{123}
\]
of the algebra $\cl_{3,0}$ can be written in the form
\begin{equation}\label{6.17'}
\cA=(a^0+\omega a^{123})\e_0+(a^1+\omega a^{23})\e_1+(a^2+\omega
a^{31})\e_2 +(a^3+\omega a^{12})\e_3,
\end{equation}
where $\omega=\e_{123}$. Since $\omega$ belongs to a center of the
algebra $\cl_{3,0}$ ($\omega$ commutes with all the basis elements)
and $\omega^2=-1$, then we can to suppose $\omega\equiv i$. The
action of the automorphism $\star$ on the homogeneous element $\cA$
of the degree $k$ is defined by the formula (\ref{auto16}). In
accordance with this the action of the automorphism
$\cA\rightarrow\cA^\star$, where $\cA$ is the element (\ref{6.17'}),
has the form
\begin{equation}\label{In1}
\cA\longrightarrow\cA^\star=-(a^0-\omega a^{123})\e_0-(a^1-\omega
a^{23})\e_1 -(a^2-\omega a^{31})\e_2-(a^3-\omega a^{12})\e_3.
\end{equation}
Therefore, $\star:\,\C_2\rightarrow -\overset{\ast}{\C}_2$.
Correspondingly, the action of the antiautomorphism
$\cA\rightarrow\widetilde{\cA}$ on the homogeneous element $\cA$ of
the degree $k$ is defined by the formula (\ref{auto19}). Thus, for
the element (\ref{6.17'}) we obtain
\begin{equation}\label{In2}
\cA\longrightarrow\widetilde{\cA}=(a^0-\omega a^{123})\e_0+
(a^1-\omega a^{23})\e_1+(a^2-\omega a^{31})\e_2+(a^3-\omega
a^{12})\e_3,
\end{equation}
that is, $\widetilde{\phantom{cc}}:\,\C_2\rightarrow
\overset{\ast}{\C}_2$. This allows us to define an algebraic
analogue of the Wigner's representation doubling:
$\C_2\oplus\overset{\ast}{\C}_2$. Further, from (\ref{6.17'}) it
follows that
$\cA=\cA_1+\omega\cA_2=(a^0\e_0+a^1\e_1+a^2\e_2+a^3\e_3)+
\omega(a^{123}\e_0+a^{23}\e_1+a^{31}\e_2+a^{12}\e_3)$. In general
case, by virtue of an isomorphism $\C_{2k}\simeq\cl_{p,q}$, where
$\cl_{p,q}$ is a real Clifford algebra with a division ring
$\K\simeq\C$, $p-q\equiv 3,7 \pmod{8}$, we have for the general
element of $\cl_{p,q}$ an expression $\cA=\cA_1+\omega\cA_2$, here
$\omega^2=\e^2_{12\ldots p+q}=-1$ and, therefore, $\omega\equiv i$.
Thus, from $\C_{2k}$ under action of the automorphism
$\cA\rightarrow\cA^\star$ we obtain a general algebraic doubling
\begin{equation}\label{D}
\C_{2k}\oplus\overset{\ast}{\C}_{2k}.
\end{equation}

The tensor product
$\overset{\ast}{\C}_2\otimes\overset{\ast}{\C}_2\otimes\cdots\otimes
\overset{\ast}{\C}_2\simeq\overset{\ast}{\C}_{2r}$ of the $r$
algebras $\overset{\ast}{\C}_2$ induces the tensor product of the
$r$ spinspaces $\dot{\dS}_2$:
\[
\dot{\dS}_2\otimes\dot{\dS}_2\otimes\cdots\otimes\dot{\dS}_2=\dot{\dS}_{2^r}.
\]
Vectors of the spinspace $\dot{\dS}_{2^r}$ have the form
\begin{equation}\label{6.18}
\boldsymbol{s}^{\dot{\alpha}_1\dot{\alpha}_2\cdots\dot{\alpha}_r}=\sum
\boldsymbol{s}^{\dot{\alpha}_1}\otimes
\boldsymbol{s}^{\dot{\alpha}_2}\otimes\cdots\otimes
\boldsymbol{s}^{\dot{\alpha}_r},
\end{equation}
where the each cospinor $\boldsymbol{s}^{\dot{\alpha}_i}$ from
(\ref{6.18}) is transformed by the rule ${}^\prime
\boldsymbol{s}^{\dot{\alpha}^\prime_i}=
\sigma^{\dot{\alpha}^\prime_i}_{\dot{\alpha}_i}\boldsymbol{s}^{\dot{\alpha}_i}$.
Therefore,
\begin{equation}\label{6.19}
{}^\prime
\boldsymbol{s}^{\dot{\alpha}^\prime_1\dot{\alpha}^\prime_2\cdots
\dot{\alpha}^\prime_r}=\sum\sigma^{\dot{\alpha}^\prime_1}_{\dot{\alpha}_1}
\sigma^{\dot{\alpha}^\prime_2}_{\dot{\alpha}_2}\cdots
\sigma^{\dot{\alpha}^\prime_r}_{\dot{\alpha}_r}
\boldsymbol{s}^{\dot{\alpha}_1\dot{\alpha}_2\cdots\dot{\alpha}_r}.
\end{equation}\begin{sloppypar}\noindent
The representation (\ref{6.19}) is called {\it a dotted spintensor
representation of the proper Lorentz group of the rank
$r$}.\end{sloppypar}

In general case we have a tensor product of the $k$ algebras $\C_2$
and the $r$ algebras $\overset{\ast}{\C}_2$:
\[
\C_2\otimes\C_2\otimes\cdots\otimes\C_2\bigotimes
\overset{\ast}{\C}_2\otimes
\overset{\ast}{\C}_2\otimes\cdots\otimes\overset{\ast}{\C}_2\simeq
\C_{2k}\otimes\overset{\ast}{\C}_{2r},
\]
which induces a spinspace
\[
\dS_2\otimes\dS_2\otimes\cdots\otimes\dS_2\bigotimes\dot{\dS}_2\otimes
\dot{\dS}_2\otimes\cdots\otimes\dot{\dS}_2=\dS_{2^{k+r}}
\]
with the vectors
\begin{equation}\label{6.20'}
\boldsymbol{s}^{\alpha_1\alpha_2\cdots\alpha_k\dot{\alpha}_1\dot{\alpha}_2\cdots
\dot{\alpha}_r}=\sum \boldsymbol{s}^{\alpha_1}\otimes
\boldsymbol{s}^{\alpha_2}\otimes\cdots\otimes
\boldsymbol{s}^{\alpha_k}\otimes
\boldsymbol{s}^{\dot{\alpha}_1}\otimes
\boldsymbol{s}^{\dot{\alpha}_2}\otimes\cdots\otimes
\boldsymbol{s}^{\dot{\alpha}_r}.
\end{equation}
In this case we have a natural unification of the representations
(\ref{6.17}) and (\ref{6.19}):
\begin{equation}\label{6.20}
{}^\prime
\boldsymbol{s}^{\alpha^\prime_1\alpha^\prime_2\cdots\alpha^\prime_k
\dot{\alpha}^\prime_1\dot{\alpha}^\prime_2\cdots\dot{\alpha}^\prime_r}=\sum
\sigma^{\alpha^\prime_1}_{\alpha_1}\sigma^{\alpha^\prime_2}_{\alpha_2}\cdots
\sigma^{\alpha^\prime_k}_{\alpha_k}\sigma^{\dot{\alpha}^\prime_1}_{
\dot{\alpha}_1}\sigma^{\dot{\alpha}^\prime_2}_{\dot{\alpha}_2}\cdots
\sigma^{\dot{\alpha}^\prime_r}_{\dot{\alpha}_r}
\boldsymbol{s}^{\alpha_1\alpha_2\cdots\alpha_k\dot{\alpha}_1\dot{\alpha}_2\cdots
\dot{\alpha}_r}.
\end{equation}
So, a representation (\ref{6.20}) is called {\it a spintensor
representation of the proper Lorentz group of the rank $(k,r)$}.

Further, let $\fg\rightarrow T_{\fg}$ be an arbitrary linear
representation of the proper orthochronous Lorentz group
$\fG_+=\SO_0(1,3)$ and let $\sA_i(t)=T_{a_i(t)}$ be an infinitesimal
operator corresponding to the rotation $a_i(t)\in\fG_+$.
Analogously, let $\sB_i(t)=T_{b_i(t)}$, where $b_i(t)\in\fG_+$ is
the hyperbolic rotation. The operators $\sA_i$ and $\sB_i$ satisfy
to the following relations:
\begin{equation}\label{Com1}
\left.\begin{array}{lll} \ld\sA_1,\sA_2\rd=\sA_3, &
\ld\sA_2,\sA_3\rd=\sA_1, &
\ld\sA_3,\sA_1\rd=\sA_2,\\[0.1cm]
\ld\sB_1,\sB_2\rd=-\sA_3, & \ld\sB_2,\sB_3\rd=-\sA_1, &
\ld\sB_3,\sB_1\rd=-\sA_2,\\[0.1cm]
\ld\sA_1,\sB_1\rd=0, & \ld\sA_2,\sB_2\rd=0, &
\ld\sA_3,\sB_3\rd=0,\\[0.1cm]
\ld\sA_1,\sB_2\rd=\sB_3, & \ld\sA_1,\sB_3\rd=-\sB_2, & \\[0.1cm]
\ld\sA_2,\sB_3\rd=\sB_1, & \ld\sA_2,\sB_1\rd=-\sB_3, & \\[0.1cm]
\ld\sA_3,\sB_1\rd=\sB_2, & \ld\sA_3,\sB_2\rd=-\sB_1. &
\end{array}\right\}
\end{equation}

As is known \cite{GMS}, finite-dimensional (spinor) representations
of the group $\SO_0(1,3)$ in the space of symmetrical polynomials
$\Sym_{(k,r)}$ have the following form:
\begin{equation}\label{TenRep}
T_{\fg}q(\xi,\overline{\xi})=(\gamma\xi+\delta)^{l_0+l_1-1}
\overline{(\gamma\xi+\delta)}^{l_0-l_1+1}q\left(\frac{\alpha\xi+\beta}{\gamma\xi+\delta};
\frac{\overline{\alpha\xi+\beta}}{\overline{\gamma\xi+\delta}}\right),
\end{equation}
where $k=l_0+l_1-1$, $r=l_0-l_1+1$, and the pair $(l_0,l_1)$ defines
some representation of the group $\SO_0(1,3)$ in the
Gel'fand-Naimark basis:
\[
H_{3}\xi_{k\nu} =m\xi_{k\nu},
\]
\[
H_{+}\xi_{k\nu} =\sqrt{(k+\nu+1)(k-\nu)}\xi_{k,\nu+1},
\]
\[
H_{-}\xi_{k\nu} =\sqrt{(k+\nu)(k-\nu+1)}\xi_{k,\nu-1},
\]
\[
F_{3}\xi_{k\nu}
=C_{l}\sqrt{k^{2}-\nu^{2}}\xi_{k-1,\nu}-A_{l}\nu\xi_{k,\nu}-C_{k+1}\sqrt{(k+1)^{2}-\nu^{2}}\xi_{k+1,\nu},
\]
\begin{multline}
F_{+}\xi_{k\nu} =C_{k}\sqrt{(k-\nu)(k-\nu-1)}\xi_{k-1,\nu+1}
-A_{k}\sqrt{(k-\nu)(k+\nu+1)}\xi_{k,\nu+1}+ \\
+C_{k+1}\sqrt{(k+\nu+1)(k+\nu+2)}\xi_{k+1,\nu+1},\nonumber
\end{multline}
\begin{multline}
F_{-}\xi_{k\nu} =-C_{k}\sqrt{(k+\nu)(k+\nu-1)}\xi_{k-1,\nu-1}-A_{k}\sqrt{(k+\nu)(k-\nu+1)}\xi_{k,\nu-1}-\\
-C_{k+1}\sqrt{(k-\nu+1)(k-\nu+2)}\xi_{k+1,\nu-1},\nonumber
\end{multline}
\begin{equation}\label{GNB}
A_{k}=\frac{il_{0}l_{1}}{k(k+1)},\quad
C_{k}=\frac{i}{k}\sqrt{\frac{(k^{2}-l^{2}_{0})(k^{2}-l^{2}_{1})}
{4k^{2}-1}},
\end{equation}
$$\nu=-k,-k+1,\ldots,k-1,k,$$
$$k=l_{0}\,,l_{0}+1,\ldots,$$
where $l_{0}$ is positive integer or half-integer number, $l_{1}$ is
an arbitrary complex number. These formulae define a
finite-dimensional representation of the group $\SO_0(1,3)$ when
$l^2_1=(l_0+p)^2$, $p$ is some natural number. In the case
$l^2_1\neq(l_0+p)^2$ we have an infinite-dimensional representation
of $\SO_0(1,3)$. The operators $H_{3},H_{+},H_{-},F_{3},F_{+},F_{-}$
are
\begin{eqnarray}
&&H_+=i\sA_1-\sA_2,\quad H_-=i\sA_1+\sA_2,\quad H_3=i\sA_3,\nonumber\\
&&F_+=i\sB_1-\sB_2,\quad F_-=i\sB_1+\sB_2,\quad F_3=i\sB_3.\nonumber
\end{eqnarray}
Let us consider the operators
\begin{gather}
\sX_l=\frac{1}{2}i(\sA_l+i\sB_l),\quad\sY_l=\frac{1}{2}i(\sA_l-\bi\sB_l),
\label{SL25}\\
(l=1,2,3).\nonumber
\end{gather}
Using the relations (\ref{Com1}), we find that
\begin{equation}\label{Com2}
\ld\sX_k,\sX_l\rd=i\varepsilon_{klm}\sX_m,\quad
\ld\sY_l,\sY_m\rd=i\varepsilon_{lmn}\sY_n,\quad \ld\sX_l,\sY_m\rd=0.
\end{equation}
Further, introducing generators of the form
\begin{equation}\label{SL26}
\left.\begin{array}{cc}
\sX_+=\sX_1+i\sX_2, & \sX_-=\sX_1-i\sX_2,\\[0.1cm]
\sY_+=\sY_1+i\sY_2, & \sY_-=\sY_1-i\sY_2,
\end{array}\right\}
\end{equation}
we see that in virtue of commutativity of the relations (\ref{Com2})
a space of an irreducible finite--dimensional representation of the
group $\SL(2,\C)$ can be spanned on the totality of
$(2l+1)(2\dot{l}+1)$ basis vectors $\mid
l,m;\dot{l},\dot{m}\rangle$, where $l,m,\dot{l},\dot{m}$ are integer
or half--integer numbers, $-l\leq m\leq l$, $-\dot{l}\leq
\dot{m}\leq \dot{l}$. Therefore,
\begin{eqnarray}
&&\sX_-\mid l,m;\dot{l},\dot{m}\rangle= \sqrt{(l+m)(l-m+1)}\mid
l,m-1,\dot{l},\dot{m}\rangle
\;\;(m>-l),\nonumber\\
&&\sX_+\mid l,m;\dot{l},\dot{m}\rangle= \sqrt{(l-m)(l+m+1)}\mid
l,m+1;\dot{l},\dot{m}\rangle
\;\;(m<l),\nonumber\\
&&\sX_3\mid l,m;\dot{l},\dot{m}\rangle=
m\mid l,m;\dot{l},\dot{m}\rangle,\nonumber\\
&&\sY_-\mid l,m;\dot{l},\dot{m}\rangle=
\sqrt{(\dot{l}+\dot{m})(\dot{l}-\dot{m}+1)}\mid
l,m;\dot{l},\dot{m}-1
\rangle\;\;(\dot{m}>-\dot{l}),\nonumber\\
&&\sY_+\mid l,m;\dot{l},\dot{m}\rangle=
\sqrt{(\dot{l}-\dot{m})(\dot{l}+\dot{m}+1)}\mid
l,m;\dot{l},\dot{m}+1
\rangle\;\;(\dot{m}<\dot{l}),\nonumber\\
&&\sY_3\mid l,m;\dot{l},\dot{m}\rangle= \dot{m}\mid
l,m;\dot{l},\dot{m}\rangle.\label{Waerden}
\end{eqnarray}
From the relations (\ref{Com2}) it follows that each of the sets of
infinitesimal operators $\sX$ and $\sY$ generates the group $\SU(2)$
and these two groups commute with each other. Thus, from the
relations (\ref{Com2}) and (\ref{Waerden}) it follows that the group
$\SL(2,\C)$, in essence, is equivalent locally to the group
$\SU(2)\otimes\SU(2)$. This representation for the Lorentz group was
first given by Van der Waerden in \cite{Wa32}. The following
relations between generators $\sY_\pm$, $\sX_\pm$, $\sY_3$, $\sX_3$
and $H_\pm$, $F_\pm$, $H_3$, $F_3$ define a relationship between the
Van der Waerden and Gel'fand-Naimark bases:
\[
{\renewcommand{\arraystretch}{1.7}
\begin{array}{ccc}
\sY_+&=&-\dfrac{1}{2}(F_++i H_+),\\
\sY_-&=&-\dfrac{1}{2}(F_-+i H_-),\\
\sY_3&=&-\dfrac{1}{2}(F_3+i H_3),
\end{array}\quad
\begin{array}{ccc}
\sX_+&=&\dfrac{1}{2}(F_+-i H_+),\\
\sX_-&=&\dfrac{1}{2}(F_--i H_-),\\
\sX_3&=&\dfrac{1}{2}(F_3-i H_3).
\end{array}
}
\]
The relation between the numbers $l_0$, $l_1$ and the number $l$
(the weight of representation in the basis (\ref{Waerden})) is given
by the following formula:
\[
(l_0,l_1)=\left(l,l+1\right).
\]
Whence it immediately follows that
\begin{equation}\label{RelL}
l=\frac{l_0+l_1-1}{2}.
\end{equation}
As is known \cite{GMS}, if an irreducible representation of the
proper Lorentz group $\SO_0(1,3)$ is defined by the pair
$(l_0,l_1)$, then a conjugated representation is also irreducible
and is defined by a pair $\pm(l_0,-l_1)$. Therefore,
\[
(l_0,l_1)=\left(-\dot{l},\,\dot{l}+1\right).
\]
Thus,
\begin{equation}\label{RelDL}
\dot{l}=\frac{l_0-l_1+1}{2}.
\end{equation}

Further, representations $\fC^{s_1,s_2}$ and
$\fC^{s^\prime_1,s^\prime_2}$ are called \emph{interlocking
irreducible representations of the Lorentz group }, that is, such
representations that $s^\prime_1=s_1\pm 1$, $s^\prime_2=s_2\pm 1$
\cite{GY48}. The two most full schemes of the interlocking
irreducible representations of the Lorentz group (Gel'fand-Yaglom
chains) for integer and half-integer spins are shown on the Fig.\,3
and Fig.\,4.
\begin{figure}[ht]
\[
\dgARROWLENGTH=0.5em \dgHORIZPAD=1.7em \dgVERTPAD=2.2ex
\begin{diagram}
\node[5]{\fC^{s,0}}\arrow{e,-}\arrow{s,-}\node{\cdots}\\
\node[5]{\vdots}\arrow{s,-}\\
\node[3]{\fC^{4,0}}\arrow{e,-}\arrow{s,-}\node{\cdots}\arrow{e,-}
\node{\fC^{s+2,2-s}}\arrow{s,-}\arrow{e,-}
\node{\cdots}\\
\node[2]{\fC^{2,0}}\arrow{s,-}\arrow{e,-}
\node{\fC^{3,-1}}\arrow{s,-}\arrow{e,-} \node{\cdots}\arrow{e,-}
\node{\fC^{s+1,1-s}}\arrow{s,-}\arrow{e,-}
\node{\cdots}\\
\node{\fC^{0,0}}\arrow{e,-} \node{\fC^{1,-1}}\arrow{s,-}\arrow{e,-}
\node{\fC^{2,-2}}\arrow{s,-}\arrow{e,-}\node{\cdots}\arrow{e,-}
\node{\fC^{s,-s}}\arrow{s,-}\arrow{e,-}
\node{\cdots}\\
\node[2]{\fC^{0,-2}}\arrow{e,-}
\node{\fC^{1,-3}}\arrow{s,-}\arrow{e,-} \node{\cdots}\arrow{e,-}
\node{\fC^{s-1,-1-s}}\arrow{s,-}\arrow{e,-}
\node{\cdots}\\
\node[3]{\fC^{0,-4}}\arrow{e,-}\node{\cdots}\arrow{e,-}
\node{\fC^{s-2,-2-s}}\arrow{s,-}\arrow{e,-}
\node{\cdots}\\
\node[5]{\vdots}\arrow{s,-}\\
\node[5]{\fC^{0,-2s}}\arrow{e,-}\node{\cdots}
\end{diagram}
\]
\begin{center}{\small {\bf Fig.\,3:} Complex interlocking representations of $\spin_+(1,3)$ for the fields of integer spin
(Bose-scheme).}\end{center}
\end{figure}
\begin{figure}[ht]
\[
\dgARROWLENGTH=0.5em
\dgHORIZPAD=1.7em %1.5em
\dgVERTPAD=2.2ex %2ex
\begin{diagram}
\node[4]{\fC^{2s,0}}\arrow{s,-}\arrow{e,-}\node{\cdots}\\
\node[4]{\vdots}\arrow{s,-}\\
\node[2]{\fC^{3,0}}\arrow{s,-}\arrow{e,-} \node{\cdots}\arrow{e,-}
\node{\fC^{(2s+3)/2,(3-2s)2}}\arrow{s,-}\arrow{e,-}
\node{\cdots}\\
\node{\fC^{1,0}}\arrow{s,-}\arrow{e,-}
\node{\fC^{2,-1}}\arrow{s,-}\arrow{e,-} \node{\cdots}\arrow{e,-}
\node{\fC^{(2s+1)/2,(1-2s)/2}}\arrow{s,-}\arrow{e,-}
\node{\cdots}\\
\node{\fC^{0,-1}}\arrow{e,-} \node{\fC^{1,-2}}\arrow{s,-}\arrow{e,-}
\node{\cdots}\arrow{e,-}
\node{\fC^{(2s-1)/2,-(2s+1)/2}}\arrow{s,-}\arrow{e,-}
\node{\cdots}\\
\node[2]{\fC^{0,-3}}\arrow{e,-} \node{\cdots}\arrow{e,-}
\node{\fC^{(2s-3)/2,-(2s+3)/2}}\arrow{s,-}\arrow{e,-}
\node{\cdots}\\
\node[4]{\vdots}\arrow{s,-}\\
\node[4]{\fC^{0,-2s}}\arrow{e,-}\node{\cdots}
\end{diagram}
\]
\begin{center}{\small {\bf Fig.\,4:} Complex interlocking representations of $\spin_+(1,3)$ for the fields of half-integer spin
(Fermi-scheme).}\end{center}
\end{figure}
As follows from Fig.\,3 the simplest field is the scalar field
$\fC^{0,0}$. This field is described by the Fock-Klein-Gordon
equation. In its turn, the simplest field from the Fermi-scheme
(Fig.\,4) is the electron-positron (spinor) field corresponding to
the following interlocking scheme:
\[
\dgARROWLENGTH=2.5em
\begin{diagram}
\node{\fC^{1,0}}\arrow{e,<>} \node{\fC^{0,-1}}
\end{diagram}.
\]
\begin{sloppypar}\noindent
This field is described by the Dirac equation. Further, the next
field from the Bose-scheme (Fig.\,3) is a photon field (Maxwell
field) defined within the interlocking scheme\end{sloppypar}
\[
\dgARROWLENGTH=2.5em
\begin{diagram}
\node{\fC^{2,0}}\arrow{e,<>}\node{\fC^{1,-1}}
\arrow{e,<>}\node{\fC^{0,-2}}
\end{diagram}.
\]
This interlocking scheme leads to the Maxwell equations. The fields
$(1/2,0)\oplus(0,1/2)$ and $(1,0)\oplus(0,1)$ (Dirac and Maxwell
fields) are particular cases of fields of the type
$(l,0)\oplus(0,l)$. Wave equations for such fields and their general
solutions were found in the works \cite{Var03,Var04e,Var05b}.

It is easy to see that the interlocking scheme, corresponded to the
Maxwell field, contains the field of tensor type, $\fC^{1,-1}$.
Further, the next interlocking scheme (see Fig.\,4)
\[
\dgARROWLENGTH=2.5em
\begin{diagram}
\node{\fC^{3,0}}\arrow{e,<>}\node{\fC^{2,-1}}
\arrow{e,<>}\node{\fC^{1,-2}}\arrow{e,<>}\node{\fC^{0,-3}}
\end{diagram},
\]
corresponding to the Pauli-Fierz equations \cite{FP39}, contains a
chain of the type
\[
\dgARROWLENGTH=2.5em
\begin{diagram}
\node{\fC^{2,-1}}\arrow{e,<>} \node{\fC^{1,-2}}
\end{diagram}.
\]
In such a way we come to wave equations for the fields
$\boldsymbol{\psi}(\balpha)=\langle
x,\fg\,|\boldsymbol{\psi}\rangle$ of tensor type
$(l_1,l_2)\oplus(l_2,l_1)$. Wave equations for such fields and their
general solutions were found in the work \cite{Var07b}.

A relation between the numbers $l_0$, $l_1$ of the Gel'fand-Naimark
representation (\ref{GNB}) and the number $k$ of the factors $\C_2$
in the product $\C_2\otimes\C_2\otimes\cdots\otimes\C_2$ is given by
the following formula:
\[
(l_0,l_1)=\left(\frac{k}{2},\frac{k}{2}+1\right),
\]
Hence it immediately follows that $k=l_0+l_1-1$. Thus, we have {\it
a complex representation $\fC^{l_0+l_1-1,0}$  of the group
$\spin_+(1,3)$ in the spinspace $\dS_{2^k}$}. If the representation
$\fC^{l_0+l_1-1,0}$ is reducible, then the space $\dS_{2^{k}}$ is
decomposed into a direct sum of irreducible subspaces, that is, it
is possible to choose in $\dS_{2^{k}}$ such a basis, in which all
the matrices take a block-diagonal form. Then the field
$\boldsymbol{\psi}(\balpha)$ is reduced to some number of the fields
corresponding to irreducible representations of the group
$\spin_+(1,3)$, each of which is transformed independently from the
other, and the field $\boldsymbol{\psi}(\balpha)$ in this case is a
collection of the fields with more simple structure. It is obvious
that these more simple fields correspond to irreducible
representations $\fC$.

Analogously, a relation between the numbers $l_0$, $l_1$ of the
Gel'fand-Naimark representation (\ref{GNB}) and the number $r$ of
the factors $\overset{\ast}{\C}_2$ in the product
$\overset{\ast}{\C}_2\otimes
\overset{\ast}{\C}_2\otimes\cdots\otimes\overset{\ast}{\C}_2$ is
given by the following formula:
\[
(-l_0,l_1)=\left(-\frac{r}{2},\frac{r}{2}+1\right).
\]
Hence it immediately follows that $r=l_0-l_1+1$. Thus, we have a
complex representation $\fC^{0,l_0-l_1+1}$ of $\spin_+(1,3)$ in the
spinspace $\dS_{2^r}$.

As is known \cite{Nai58,GMS,RF}, a system of irreducible
finite-dimensional representations of the group $\fG_+$ is realized
in the space $\Sym_{(k,r)}\subset\dS_{2^{k+r}}$ of symmetric
spintensors. The dimensionality of $\Sym_{(k,r)}$ is equal to
$(k+1)(r+1)$. A representation of the group $\fG_+$, defined by such
spintensors, is irreducible and denoted by the symbol
$\fD^{(l,\dot{l})}(\sigma)$, where $2l=k,\;2\dot{l}=r$, the numbers
$l$ and $\dot{l}$ are integer or half-integer. In general case, the
field $\boldsymbol{\psi}(\balpha)$ is the field of type
$(l,\dot{l})$. As a rule, in physics there are three basic types of
the fields.\\[0.3cm]
a) The field of type $(l,0)$. The structure of this field (or the
field $(0,\dot{l})$) is described by the representation
$\fD^{(l,0)}(\sigma)$ ($\fD^{(0,\dot{l})}(\sigma)$), which is
realized in the space $\dS_{2^k}$ ($\dS_{2^r}$). At this point, the
algebra $\C_{2k}\simeq\C_2\otimes\C_2\otimes\cdots\otimes\C_2$
(correspondingly,
$\overset{\ast}{\C}_{2k}\simeq\overset{\ast}{\C}_2\otimes
\overset{\ast}{\C}_2\otimes\cdots\otimes\overset{\ast}{\C}_2$) is
associated with the field of the type $(l,0)$ (correspondingly,
$(0,\dot{l})$). The trivial case $l=0$ corresponds to {\it a
Pauli-Weisskopf field} describing the scalar particles. Further, at
$l=\dot{l}=1/2$ we have {\it a Weyl field} describing the neutrino.
At this point, the antineutrino is described by a fundamental
representation $\fD^{(1/2,0)}(\sigma)=\sigma$ of the group $\fG_+$
and the algebra $\C_2$. Correspondingly, the neutrino is described
by a conjugated representation $\fD^{(0,1/2)}(\sigma)$ and the
algebra $\overset{\ast}{\C}_2$. In essence, one can say that the
algebra $\C_2$ ($\overset{\ast}{\C}_2$) is the basic building block,
from which other physical fields built by means of direct sum or
tensor product. One can say also that this situation looks like
the de Broglie fusion method \cite{Bro43}.\\[0.3cm]
b) The field of type $(l,0)\oplus(0,\dot{l})$. The structure of this
field admits a space inversion and, therefore, in accordance with a
Wigner's doubling \cite{Wig64} is described by a representation
$\fD^{(l,0)}\oplus \fD^{(0,\dot{l})}$ of the group $\fG_+$. This
representation is realized in the space $\dS_{2^{2k}}$. The Clifford
algebra, related with this representation, is a direct sum
$\C_{2k}\oplus\overset{\ast}{\C}_{2k}\simeq
\C_2\otimes\C_2\otimes\cdots\otimes\C_2\bigoplus\overset{\ast}{\C}_2\otimes
\overset{\ast}{\C}_2\otimes\cdots\otimes\overset{\ast}{\C}_2$. In
the simplest case $l=1/2$ we have {\it bispinor (electron--positron)
Dirac field} $(1/2,0)\oplus(0,1/2)$ with the algebra $\C_2\oplus
\overset{\ast}{\C}_2$. It should be noted that the Dirac algebra
$\C_4$, considered as a tensor product $\C_2\otimes\C_2$ (or
$\C_2\otimes\overset{\ast}{\C}_2$) in accordance with (\ref{6.16})
(or (\ref{6.20'})) gives rise to spintensors
$\boldsymbol{s}^{\alpha_1\alpha_2}$ (or
$\boldsymbol{s}^{\alpha_1\dot{\alpha}_1}$), but it contradicts with
the usual definition of the Dirac bispinor as a pair
$(\boldsymbol{s}^{\alpha_1},\boldsymbol{s}^{\dot{\alpha}_1})$.
Therefore, the Clifford algebra, associated with the Dirac field, is
$\C_2\oplus\overset{\ast}{\C}_2$, and a spinspace of this sum in
virtue of unique decomposition $\dS_2\oplus\dot{\dS}_2=\dS_4$ is a
spinspace of $\C_4$.\\[0.3cm]
c) Tensor fields
$(l^\prime,l^{\prime\prime})\oplus(\dot{l}^{\prime\prime},\dot{l}^\prime)$.
The fields $(l^\prime,l^{\prime\prime})$ and
$(\dot{l}^{\prime\prime},\dot{l}^\prime)$ are defined within the
arbitrary spin chains (see Fig.\,3 and Fig.\,4). Universal coverings
of these spin chains are constructed within the representations
$\fC^{l_0+l_1-1,l_0-l_1+1}$ and $\fC^{l_0-l_1+1,l_0+l_1-1}$ of
$\spin_+(1,3)$ in the spinspaces $\dS_{2^{k+r}}$ associated with the
algebra $\C_2\otimes\C_2\otimes\cdots\otimes\C_2\bigotimes
\overset{\ast}{\C}_2\otimes
\overset{\ast}{\C}_2\otimes\cdots\otimes\overset{\ast}{\C}_2$. A
relation between the numbers $l_0$, $l_1$ of the Gel'fand-Naimark
basis (\ref{GNB}) and the numbers $k$ and $r$ of the factors $\C_2$
and $\overset{\ast}{\C}_2$ is given by the following formula:
\[
(l_0,l_1)=\left(\frac{k-r}{2},\frac{k+r}{2}+1\right).
\]
2) Real representations.\\
As is known \cite{GMS}, if an irreducible representation of the
proper Lorentz group $\SO_0(1,3)$ is defined by the pair
$(l_0,l_1)$, then a conjugated representation is also irreducible
and is defined by a pair $\pm(l_0,-l_1)$. Hence it follows that an
irreducible representation is equivalent to its conjugated
representation only in the case when this representation is defined
by a pair $(0,l_1)$ or $(l_0,0)$, that is, one from the two numbers
$l_0$ and $l_1$ is equal to zero. We take $l_1=0$. In its turn, for
the complex Clifford algebra $\C_n$ ($\overset{\ast}{\C}_n$),
associated with the representation
$\fC^{l_0+l_1-1,0}\,(\fC^{0,l_0-l_1+1})$ of $\fG_+$, the equivalence
of the representation to its conjugated representation induce a
relation $\C_n=\overset{\ast}{\C}_n$, which is fulfilled only in the
case when the algebra $\C_n=\C\otimes\cl_{p,q}$ is reduced to its
real subalgebra $\cl_{p,q}$ ($p+q=n$). Thus, the restriction of the
complex representation $\fC^{l_0+l_1-1,0}$ (or $\fC^{0,l_0-l_1+1}$)
of the group $\spin_+(1,3)$ onto the real representation,
$(l_0,l_1)\rightarrow(l_0,0)$, induce a restriction
$\C_n\rightarrow\cl_{p,q}$. Further, as is known, over the field
$\F=\R$ at $p+q\equiv 0\pmod{2}$ there exist four types of real
subalgebras $\cl_{p,q}$: two types $p-q\equiv 0,2\pmod{8}$ with the
real division ring $\K\simeq\R$ and two types $p-q\equiv
4,6\pmod{8}$ with the quaternionic division ring $\K\simeq\BH$.
Therefore, we have the following four classes of real
representations of the group $\spin_+(1,3)$:
\[
\fR^{l_0}_0\;\leftrightarrow\;\cl_{p,q},\;p-q\equiv
0\pmod{8},\;\K\simeq\R;
\]
\[
\fR^{l_0}_2\;\leftrightarrow\;\cl_{p,q},\;p-q\equiv
2\pmod{8},\;\K\simeq\R;
\]
\[
\fH^{l_0}_4\;\leftrightarrow\;\cl_{p,q},\;p-q\equiv
4\pmod{8},\;\K\simeq\BH;
\]
\begin{equation}
\fH^{l_0}_6\;\leftrightarrow\;\cl_{p,q},\;p-q\equiv
6\pmod{8},\;\K\simeq\BH. \label{IdentPer}
\end{equation}
We call the representations $\fH^{l_0}_4$ and $\fH^{l_0}_6$ {\it
quaternionic representations} of the group $\spin_+(1,3)$. On the
other hand, over the field $\F=\R$ at $p+q\equiv 1\pmod{2}$ there
exist four types of real subalgebras $\cl_{p,q}$: two types
$p-q\equiv 3,7\pmod{8}$ with the complex division ring $\K\simeq\C$,
one type $p-q\equiv 1\pmod{8}$ with the double division ring
$\K\simeq\R\oplus\R$ and one type $p-q\equiv 5\pmod{8}$ with the
double quaternionic division ring $\K\simeq\BH\oplus\BH$. Therefore,
we have the following four classes of real representations of
$\spin_+(1,3)$:
\[
\fC^{l_0}_3\;\leftrightarrow\;\cl_{p,q},\;p-q\equiv
3\pmod{8},\;\K\simeq\C;
\]
\[
\fC^{l_0}_7\;\leftrightarrow\;\cl_{p,q},\;p-q\equiv
7\pmod{8},\;\K\simeq\C;
\]
\[
\fR^{l_0}_{0,2}\cup\fR^{l_0}_{0,2}\;\leftrightarrow\;\cl_{p,q},\;p-q\equiv
1\pmod{8},\;\K\simeq\R\oplus\R;
\]
\begin{equation}\label{IdentPerD}
\fH^{l_0}_{4,6}\cup\fH^{l_0}_{4,6}\;\leftrightarrow\;\cl_{p,q},\;p-q\equiv
5\pmod{8},\;\K\simeq\BH\oplus\BH.
\end{equation}
Here $\fR^{l_0}_{0,2}\cup\fR^{l_0}_{0,2}\simeq\fR^{l_0}_1$ means
that in virtue of an isomorphism
$\cl_{p,q}\simeq\cl_{p,q-1}\oplus\cl_{p,q-1}$ (or
$\cl_{p,q}\simeq\cl_{q,p-1}\oplus\cl_{q,p-1}$), where $p-q\equiv
1\pmod{8}$ and algebras $\cl_{p,q-1}$ (or $\cl_{q,p-1}$) are of the
type $p-q\equiv 0\pmod{8}$ (or $p-q\equiv 2\pmod{8}$), a
representation $\fR^{l_0}_{1}$ is equivalent to
$\fR^{l_0}_{0}\cup\fR^{l_0}_{0}$ (or
$\fR^{l_0}_{2}\cup\fR^{l_0}_{2}$). Analogously, we have
$\fH^{l_0}_{4,6}\cup\fH^{l_0}_{4,6}\simeq\fH^{l_0}_5$ for
representations of $\spin_+(1,3)$ with the double quaternionic
division ring $\K\simeq\BH\oplus\BH$. Therefore, all the real
representations of $\spin_+(1,3)$ are divided into the two sets:
$\fM^+$ ($p-q\equiv 0,2,4,6\pmod{8}$) and $\fM^-$ ($p-q\equiv
1,3,5,7\pmod{8}$) which form a full system $\fM=\fM^+\oplus\fM^-$ of
the real representations of $\spin_+(1,3)$. Let us find a relation
of the number $l_0$ with the dimension of the real algebra
$\cl_{p,q}$. Using Karoubi Theorem, we obtain for the algebra
$\cl_{p,q}$ ($p+q\equiv 0\pmod{2}$) the following factorization:
\begin{equation}\label{Ten}
\cl_{p,q}\simeq\underbrace{\cl_{s_i,t_j}\otimes\cl_{s_i,t_j}\otimes\cdots
\otimes\cl_{s_i,t_j}}_{r\;\text{times}},
\end{equation}
where $s_i,t_j\in\{0,1,2\}$. The algebra (\ref{Ten}) corresponds to
the representations $\fR^{l_0}_0$, $\fR^{l_0}_2$, $\fH^{l_0}_4$,
$\fH^{l_0}_6$ of $\spin_+(1,3)$. It is obvious that $l_0=r/2$ and
$n=2r=p+q=4l_0$, therefore, $l_0=(p+q)/4$. When $p+q\equiv
1\pmod{2}$ we obtain
\begin{equation}\label{TenD}
\cl_{p,q}\simeq\underbrace{\cl_{s_i,t_j}\otimes\cl_{s_i,t_j}\otimes\cdots
\otimes\cl_{s_i,t_j}}_{r\;\text{times}}\bigcup\underbrace{\cl_{s_i,t_j}\otimes\cl_{s_i,t_j}\otimes\cdots
\otimes\cl_{s_i,t_j}}_{r\;\text{times}}.
\end{equation}
In its turn, the algebra (\ref{TenD}) is associated with the double
representations $\fR^{l_0}_{0}\cup\fR^{l_0}_{0}$,
$\fR^{l_0}_{2}\cup\fR^{l_0}_{2}$ and
$\fH^{l_0}_{4}\cup\fH^{l_0}_{4}$, $\fH^{l_0}_{6}\cup\fH^{l_0}_{6}$
of the group $\spin_+(1,3)$.
\end{proof}

\section{Modulo 2 and 8 periodicities in particle physics}
As is known, the Atiyah-Bott-Shapiro periodicity is based on the
$\dZ_2$-graded structure of the Clifford algebras $\cl$.
$\dZ_2$-graded Clifford algebras and Brauer-Wall supergroups,
considered in the section 2, allow one to define some cyclic
relations on the representation system of $\fG_+\simeq\spin_+(1,3)$,
which lead to interesting applications in particle physics (in some
sense like to the Gell-Mann--Ne'eman eightfold way \cite{GN64}).\\
1) Complex representations.\\
Graded central simple Clifford algebras over the field $\F=\C$ form
two similarity classes, which, as it is easy to see, coincide with
the two types of $\C_n$: $n\equiv 0,1\pmod{2}$. The set of these two
types (classes) forms a Brauer-Wall supergroup
$BW_{\C}=G(\C_n,\gamma,\bcirc)$ \cite{Wal64,Lou81}, an action of
which is formally equivalent to the action of the cyclic group
$\dZ_2$ (see section 2). Returning to the representations of $\fG_+$
(Theorem 3), we see that in virtue of correspondences
$\C_n\leftrightarrow\fC$ ($n\equiv 0\pmod{2}$) and
$\C_n\leftrightarrow\fC\cup\fC$ ($n\equiv 1\pmod{2}$) all the
complex representations of the group $\fG_+$ form the full system
$\fM=\fM^0\oplus\fM^1$, where $\fM^0$ ($n\equiv 0\pmod{2}$) and
$\fM^1$ ($n\equiv 1\pmod{2}$). The group action of $BW_{\C}$ can be
transferred onto the system $\fM=\fM^0\oplus\fM^1$. Indeed, a cyclic
structure of the supergroup $BW_{\C}$ is defined by the transition
$\C^+_n\overset{h}{\longrightarrow}\C_n$, where the type of $\C_n$
is defined by the formula $n=h+2r$, here $h\in\{0,1\}$, $r\in\dZ$
\cite{BTr87,BT88}. Therefore, the action of $BW_{\C}$ on the system
$\fM$ is defined by a transition
$\overset{+}{\fC}\overset{h}{\longrightarrow} \fC$, where
$\overset{+}{\fC}\!{}^{l_0+l_1-1,0}\simeq\fC^{l_0+l_1-2,0}$ when
$\fC\in\fM^0$ ($h=1$) and
$\overset{+}{\fC}=\left(\fC^{l_0+l_1-1,0}\cup
\fC^{l_0+l_1-1,0}\right)^+\sim\fC^{l_0+l_1-1,0}$ when $\fC\in\fM^1$
($h=0$), $\dim\fC=h+2r$ ($\dim\fC=l_0+l_1-1$ if $\fC\in\fM^0$ and
$\dim\fC=2(l_0+l_1-1)$ if $\fC\in\fM^1$). For example, in virtue of
the isomorphism $\C^+_2\simeq\C_1$ the transition
$\C^+_2\rightarrow\C_2$ ($\C_1\rightarrow \C_2$) induces on the
system $\fM$ a transition $\overset{+}{\fC}\!{}^{1,0}
\rightarrow\fC^{1,0}$ that in virtue of the isomorphism
$\overset{+}{\fC}\!{}^{1,0}\simeq \fC^{0,0}$ is equivalent to
$\fC^{0,0}\rightarrow\fC^{1,0}$ ($\fC^{0,0}$ is the one-dimensional
representation of the group $\fG_+$) and, therefore, $h=1$. In its
turn, the transition $\C^+_3\rightarrow\C_3$ ($\C_2\rightarrow\C_3$)
induces on the system $\fM$ a transition
$\left(\fC^{1,0}\cup\fC^{1,0}\right)^+\rightarrow
\fC^{1,0}\cup\fC^{1,0}$ and, therefore, $h=0$. Thus, we see that a
cyclic structure of the supergroup $BW_{\C}$ induces modulo 2
periodic relations on the system $\fM$, which can be explicitly
shown on the Trautman diagram (see Fig.\,5).
%\bigskip\bigskip
\begin{figure}
\[
\unitlength=0.5mm
%\linethickness{0.4pt}
\begin{picture}(50.00,60.00)(0,0)
\put(5,25){0} \put(42,25){1} \put(22,-4){$\fC$}
\put(17,55){$\fC\cup\fC$} \put(3,-13){$n\equiv 0\!\!\!\!\pmod{2}$}
\put(3,64){$n\equiv 1\!\!\!\!\pmod{2}$}
%1
\put(20,49.49){$\cdot$} \put(19.5,49.39){$\cdot$}
\put(19,49.27){$\cdot$} \put(18.5,49.14){$\cdot$}
\put(18,49){$\cdot$} \put(17.5,48.85){$\cdot$}
\put(17,48.68){$\cdot$} \put(16.5,48.51){$\cdot$}
\put(16,48.32){$\cdot$} \put(15.5,48.12){$\cdot$}
\put(15,47.91){$\cdot$} \put(14.5,47.69){$\cdot$}
\put(14,47.45){$\cdot$} \put(13.5,47.2){$\cdot$}
\put(13,46.93){$\cdot$} \put(12.5,46.65){$\cdot$}
\put(12,46.35){$\cdot$} \put(11.5,46.04){$\cdot$}
\put(11,45.71){$\cdot$} \put(10.5,45.36){$\cdot$}
\put(10,45){$\cdot$} \put(9.5,44.61){$\cdot$} \put(9,44.21){$\cdot$}
\put(8.5,43.78){$\cdot$} \put(8,43.33){$\cdot$}
\put(7.5,42.85){$\cdot$} \put(7,42.35){$\cdot$}
\put(6.5,41.81){$\cdot$} \put(6,41.25){$\cdot$}
\put(5.5,40.64){$\cdot$} \put(5,40){$\cdot$} \put(4.5,39.3){$\cdot$}
\put(4,38.56){$\cdot$} \put(3.5,37.76){$\cdot$}
\put(3,36.87){$\cdot$} \put(2.5,35.89){$\cdot$}
\put(2,34.79){$\cdot$} \put(1.5,33.53){$\cdot$} \put(1,32){$\cdot$}
\put(0.5,29.97){$\cdot$}
%2
\put(30,49.49){$\cdot$} \put(30.5,49.39){$\cdot$}
\put(31,49.27){$\cdot$} \put(31.5,49.14){$\cdot$}
\put(32,49){$\cdot$} \put(32.5,48.85){$\cdot$}
\put(33,48.68){$\cdot$} \put(33.5,48.51){$\cdot$}
\put(34,48.32){$\cdot$} \put(34.5,48.12){$\cdot$}
\put(35,47.91){$\cdot$} \put(35.5,47.69){$\cdot$}
\put(36,47.45){$\cdot$} \put(36.5,47.2){$\cdot$}
\put(37,46.93){$\cdot$} \put(37.5,46.65){$\cdot$}
\put(38,46.35){$\cdot$} \put(38.5,46.04){$\cdot$}
\put(39,45.71){$\cdot$} \put(39.5,45.36){$\cdot$}
\put(40,45){$\cdot$} \put(40.5,44.61){$\cdot$}
\put(41,44.21){$\cdot$} \put(41.5,43.78){$\cdot$}
\put(42,43.33){$\cdot$} \put(42.5,42.85){$\cdot$}
\put(43,42.35){$\cdot$} \put(43.5,41.81){$\cdot$}
\put(44,41.25){$\cdot$} \put(44.5,40.64){$\cdot$}
\put(45,40){$\cdot$} \put(45.5,39.3){$\cdot$}
\put(46,38.56){$\cdot$} \put(46.5,37.76){$\cdot$}
\put(47,36.87){$\cdot$} \put(47.5,35.89){$\cdot$}
\put(48,34.79){$\cdot$} \put(48.5,33.53){$\cdot$}
\put(49,32){$\cdot$} \put(49.5,29.97){$\cdot$}
%1'
\put(0,25){$\cdot$} \put(0,24.5){$\cdot$} \put(0.02,24){$\cdot$}
\put(0.04,23.5){$\cdot$} \put(0.08,23){$\cdot$}
\put(0.12,22.5){$\cdot$} \put(0.18,22){$\cdot$}
\put(0.25,21.5){$\cdot$} \put(0.32,21){$\cdot$}
\put(0.4,20.5){$\cdot$} \put(0.5,20){$\cdot$}
\put(0.61,19.5){$\cdot$} \put(0.73,19){$\cdot$}
\put(0.85,18.5){$\cdot$} \put(1,18){$\cdot$}
\put(1.15,17.5){$\cdot$} \put(1.31,17){$\cdot$}
\put(1.49,16.5){$\cdot$} \put(1.68,16){$\cdot$}
\put(1.88,15.5){$\cdot$} \put(2.09,15){$\cdot$}
\put(2.31,14.5){$\cdot$} \put(2.55,14){$\cdot$}
\put(2.8,13.5){$\cdot$} \put(3.06,13){$\cdot$} \put(0,25.5){$\cdot$}
\put(0.02,26){$\cdot$} \put(0.04,26.5){$\cdot$}
\put(0.08,27){$\cdot$} \put(0.12,27.5){$\cdot$}
\put(0.18,28){$\cdot$} \put(0.25,28.5){$\cdot$}
\put(0.32,29){$\cdot$} \put(0.4,29.5){$\cdot$} \put(0.5,30){$\cdot$}
\put(0.61,30.5){$\cdot$} \put(0.73,31){$\cdot$}
\put(0.85,31.5){$\cdot$} \put(1,32){$\cdot$}
\put(1.15,32.5){$\cdot$} \put(1.31,33){$\cdot$}
\put(1.49,33.5){$\cdot$} \put(1.68,34){$\cdot$}
\put(1.88,34.5){$\cdot$} \put(2.09,35){$\cdot$}
\put(2.31,35.5){$\cdot$} \put(2.55,36){$\cdot$}
\put(2.8,36.5){$\cdot$} \put(3.06,37){$\cdot$}
%2'
\put(50,25){$\cdot$} \put(49.99,24.5){$\cdot$}
\put(49.98,24){$\cdot$} \put(49.95,23.5){$\cdot$}
\put(49.92,23){$\cdot$} \put(49.87,22.5){$\cdot$}
\put(49.82,22){$\cdot$} \put(49.75,21.5){$\cdot$}
\put(49.68,21){$\cdot$} \put(49.51,20.5){$\cdot$}
\put(49.49,20){$\cdot$} \put(49.39,19.5){$\cdot$}
\put(49.27,19){$\cdot$} \put(49.14,18.5){$\cdot$}
\put(49,18){$\cdot$} \put(48.85,17.5){$\cdot$}
\put(48.69,17){$\cdot$} \put(48.51,16.5){$\cdot$}
\put(48.32,16){$\cdot$} \put(48.12,15.5){$\cdot$}
\put(47.91,15){$\cdot$} \put(47.69,14.5){$\cdot$}
\put(47.45,14){$\cdot$} \put(47.2,13.5){$\cdot$}
\put(46.93,13){$\cdot$} \put(50,25){$\cdot$}
\put(49.99,25.5){$\cdot$} \put(49.98,26){$\cdot$}
\put(49.95,26.5){$\cdot$} \put(49.92,27){$\cdot$}
\put(49.87,27.5){$\cdot$} \put(49.82,28){$\cdot$}
\put(49.75,28.5){$\cdot$} \put(49.68,29){$\cdot$}
\put(49.51,29.5){$\cdot$} \put(49.49,30){$\cdot$}
\put(49.39,30.5){$\cdot$} \put(49.27,31){$\cdot$}
\put(49.14,31.5){$\cdot$} \put(49,32){$\cdot$}
\put(48.85,32.5){$\cdot$} \put(48.69,33){$\cdot$}
\put(48.51,33.5){$\cdot$} \put(48.32,34){$\cdot$}
\put(48.12,34.5){$\cdot$} \put(47.91,35){$\cdot$}
\put(47.69,35.5){$\cdot$} \put(47.45,36){$\cdot$}
\put(47.2,36.5){$\cdot$} \put(46.93,37){$\cdot$}
%4
\put(20,0.5){$\cdot$} \put(19.5,0.61){$\cdot$}
\put(19,0.73){$\cdot$} \put(18.5,0.86){$\cdot$} \put(18,1){$\cdot$}
\put(17.5,1.15){$\cdot$} \put(17,1.31){$\cdot$}
\put(16.5,1.49){$\cdot$} \put(16,1.68){$\cdot$}
\put(15.5,1.87){$\cdot$} \put(15,2.09){$\cdot$}
\put(14.5,2.31){$\cdot$} \put(14,2.55){$\cdot$}
\put(13.5,2.8){$\cdot$} \put(13,3.06){$\cdot$}
\put(12.5,3.35){$\cdot$} \put(12,3.64){$\cdot$}
\put(11.5,3.96){$\cdot$} \put(11,4.29){$\cdot$}
\put(10.5,4.63){$\cdot$} \put(10,5){$\cdot$} \put(9.5,5.38){$\cdot$}
\put(9,5.79){$\cdot$} \put(8.5,6.22){$\cdot$} \put(8,6.67){$\cdot$}
\put(7.5,7.15){$\cdot$} \put(7,7.65){$\cdot$}
\put(6.5,8.18){$\cdot$} \put(6,8.75){$\cdot$}
\put(5.5,9.35){$\cdot$} \put(5,10){$\cdot$} \put(4.5,10.69){$\cdot$}
\put(4,11.43){$\cdot$} \put(3.5,12.24){$\cdot$}
\put(3,13.12){$\cdot$} \put(2.5,14.10){$\cdot$}
\put(2,15.20){$\cdot$} \put(1.5,16.47){$\cdot$} \put(1,18){$\cdot$}
\put(0.5,20.02){$\cdot$}
%3'
%\put(30,0.5){\vector(0,0){1}}
\put(30,0.5){$\cdot$} \put(30.5,0.61){$\cdot$}
\put(31,0.73){$\cdot$} \put(31.5,0.86){$\cdot$} \put(32,1){$\cdot$}
\put(32.5,1.15){$\cdot$} \put(33,1.31){$\cdot$}
\put(33.5,1.49){$\cdot$} \put(34,1.68){$\cdot$}
\put(34.5,1.87){$\cdot$} \put(35,2.09){$\cdot$}
\put(35.5,2.31){$\cdot$} \put(36,2.55){$\cdot$}
\put(36.5,2.8){$\cdot$} \put(37,3.06){$\cdot$}
\put(37.5,3.35){$\cdot$} \put(38,3.64){$\cdot$}
\put(38.5,3.96){$\cdot$} \put(39,4.29){$\cdot$}
\put(39.5,4.63){$\cdot$} \put(40,5){$\cdot$}
\put(40.5,5.38){$\cdot$} \put(41,5.79){$\cdot$}
\put(41.5,6.22){$\cdot$} \put(42,6.67){$\cdot$}
\put(42.5,7.15){$\cdot$} \put(43,7.65){$\cdot$}
\put(43.5,8.18){$\cdot$} \put(44,8.75){$\cdot$}
\put(44.5,9.35){$\cdot$} \put(45,10){$\cdot$}
\put(45.5,10.69){$\cdot$} \put(46,11.43){$\cdot$}
\put(46.5,12.24){$\cdot$} \put(47,13.12){$\cdot$}
\put(47.5,14.10){$\cdot$} \put(48,15.20){$\cdot$}
\put(48.5,16.47){$\cdot$} \put(49,18){$\cdot$}
\put(49.5,20.02){$\cdot$}

\end{picture}
\]
\vspace{2ex}
\begin{center}
\begin{minipage}{25pc}{\small
{\bf Fig.\,5:} An action of the supergroup
$BW_{\C}=G(\C_n,\gamma,\bcirc)$ on the system $\fM=\fM^0\oplus\fM^1$
of complex representations $\fC$ of $\fG_+\simeq\spin_+(1,3)$.}
\end{minipage}
\end{center}
\end{figure}
Further, in virtue of the isomorphism $\C^+_4\simeq\C_3$ the
transition $\C^+_4\rightarrow\C_4$ ($\C_3\rightarrow\C_4$) induces
on the system $\fM$ the following relation:
$\overset{+}{\fC}\!{}^{2,0}\rightarrow\fC^{2,0}$ or
$\fC^{1,0}\rightarrow\fC^{2,0}$ and, therefore, $h=1$. Thus, a full
rotation (cycle), presented on the diagram Fig.\,5, corresponds to a
transition from the representation $\fC^{1,0}$ of the spin $1/2$,
associated with the algebra $\C_2$, to the representation
$\fC^{2,0}$ of the spin 1, associated with the algebra $\C_4$. It is
easy to see that a sequence of transitions $\C^+_5\rightarrow\C_5$
($\C_4\rightarrow\C_5$) and $\C^+_6\rightarrow\C_6$
($\C_5\rightarrow\C_6$) leads to a transition from the
representation $\fC^{2,0}$ of the spin 1 to the representation
$\fC^{3,0}$ of the spin $3/2$, associated with the algebra $\C_6$,
and so on. Thus, the action of the supergroup
$BW_{\C}=G(\C_n,\gamma,\bcirc)$ on the system $\fM=\fM^0\oplus\fM^1$
converts fermionic representations into bosonic representations and
vice versa, that is, this action is equivalent to a supersymmetric
action.

It is obvious that a cyclic structure over the algebras $\C_n$,
defined by the supergroup $BW_{\C}$, is related immediately with a
modulo 2 periodicity of complex Clifford algebras
\cite{AtBSh,Kar79}: $\C_{n+2}\simeq\C_n\otimes\C_2$. Therefore, we
have the following relations for the complex representations $\fC$
and $\overset{\ast}{\fC}$:
\[
\fC^{l_0+l_1,0}\;\simeq\;\fC^{l_0+l_1-1,0}\otimes\fC^{1,0},
\]
\[
\fC^{0,l_0-l_1}\;\simeq\;\fC^{0,l_0-l_1+1}\otimes\fC^{0,-1}.
\]
Correspondingly, for the representation
$\fC\otimes\overset{\ast}{\fC}$ we obtain
\[
\fC^{l_0+l_1,l_0-l_1}\simeq\fC^{l_0+l_1-1,l_0-l_1+1}\otimes\fC^{1,-1}.
\]
Thus, the action of the supergroup $BW_{\C}$ forms a cycle of the
period 2 on the system $\fM$, where a basic cycle generating factor
is the fundamental representation $\fC^{1,0}$ ($\fC^{0,-1}$) of the
group $\fG_+$.\\
2) Real representations.\\
A restriction of the complex representations $\fC$ onto the real
representations $\fR$ and $\fH$ leads to a much more high-graded
periodicity over the field $\F=\R$. Physical fields, defined within
such representations, describe neutral particles, or particles
occurring in the rest such as atomic nuclei. So, the Clifford
algebra $\cl_{p,q}$ is central simple if $p-q\not\equiv
1,5\pmod{8}$. Graded central simple Clifford algebras over the field
$\F=\R$ form eight similarity classes, which, as it is easy to see,
coincide with the eight types of the algebras $\cl_{p,q}$. The set
of these eight types (classes) forms the Brauer-Wall supergroup
$BW_{\R}=G(\cl_{p,q},\gamma,\bcirc)$ \cite{Wal64,Lou81}, a cyclic
structure of which is formally equivalent to the action of cyclic
group $\dZ_8$ (see section 2). Therefore, in virtue of the
correspondences (\ref{IdentPer}) and (\ref{IdentPerD}) an action of
$BW_{\R}$ can be transferred onto the system $\fM=\fM^+\oplus\fM^-$.
In its turn, a cyclic structure of the supergroup $BW_{\R}$ is
defined by a transition
$\cl^+_{p,q}\overset{h}{\longrightarrow}\cl_{p,q}$, where the type
of $\cl_{p,q}$ is defined by the formula $q-p=h+8r$, here
$h\in\{0,\ldots,7\}$, $r\in\dZ$ \cite{BTr87,BT88}. Thus, the action
of $BW_{\R}$ on the system $\fM$ is defined by a transition
$\overset{+}{\fD}\!{}^{l_0}\overset{h}{\longrightarrow}\fD^{l_0}$,
where
$\fD^{\l_0}=\left\{\fR^{l_0}_{0,2},\fH^{l_0}_{4,6},\fC^{l_0}_{3,7},
\fR^{l_0}_{0,2}\cup\fR^{l_0}_{0,2},\fH^{l_0}_{4,6}\cup\fH^{l_0}_{4,6}\right\}$,
and $\overset{+}{\fD}\!{}^{r/2}\simeq\fD^{\frac{r-1}{2}}$ if
$\fD\in\fM^+$ and
$\left(\fD^{r/2}\cup\fD^{r/2}\right)^+\simeq\fD^{r/2}$ if
$\fD\in\fM^-$. Therefore, the cyclic structure of the supergroup
$BW_{\R}=G(\cl_{p,q},\gamma,\bcirc)$ induces on the system $\fM$
modulo 8 periodic relations, which can be shown explicitly on the
diagram Fig.\,6.
%\bigskip
\begin{figure}
\[
\unitlength=0.5mm
%\linethickness{0.4pt}
\begin{picture}(100.00,110.00)
%\put(50,50){\circle{10}[tr]}
%\circle{20}
%\put(75,93){\line(1,-1){1}}
%\put(50.5,50.3){\vector(-1,-1){25}}
%\put(50,50){\vector(1,1){25}}
%\put(50,50){\vector(1,-1){25}}
%\put(50,50){\vector(-1,1){25}}
%\put(50,50){\vector(0,-1){35}}
%\put(50,50){\vector(0,1){35}}
%\put(50,50){\vector(1,0){35}}
%\put(50,50){\vector(-1,0){35}}
%\put(50,85){\vector(1,-1){15}}
%\put(50,85){\line(2,-3){15}}
%\put(50,85){\line(-2,-3){15}}
%\framebox(50,100)[c]{ALGEBRA}
%\oval(0,0)[tr]
%\put(0,0){$\lambda$}
%\put(50,50){$\cdot$}
%\put(50,100){$\cdot$}
%\put(100,50){$\cdot$}
%\put(100,100){$\cdot$}
%\put(100,0){$\cdot$}
%\put(50,0){$\cdot$}
%\put(0,50){$\cdot$}

%\put(0,100){$\bullet$}
% II quadrant
\put(97,67){$\fC^{l_0}_7$}\put(108,64){$p-q\equiv
7\!\!\!\!\pmod{8}$} \put(80,80){1} \put(75,93.3){$\cdot$}
\put(75.5,93){$\cdot$} \put(76,92.7){$\cdot$}
\put(76.5,92.4){$\cdot$} \put(77,92.08){$\cdot$}
\put(77.5,91.76){$\cdot$} \put(78,91.42){$\cdot$}
\put(78.5,91.08){$\cdot$} \put(79,90.73){$\cdot$}
\put(79.5,90.37){$\cdot$} \put(80,90.0){$\cdot$}
\put(80.5,89.62){$\cdot$} \put(81,89.23){$\cdot$}
\put(81.5,88.83){$\cdot$} \put(82,88.42){$\cdot$}
\put(82.5,87.99){$\cdot$} \put(83,87.56){$\cdot$}
\put(83.5,87.12){$\cdot$} \put(84,86.66){$\cdot$}
\put(84.5,86.19){$\cdot$} \put(85,85.70){$\cdot$}
\put(85.5,85.21){$\cdot$} \put(86,84.69){$\cdot$}
\put(86.5,84.17){$\cdot$} \put(87,83.63){$\cdot$}
\put(87.5,83.07){$\cdot$} \put(88,82.49){$\cdot$}
\put(88.5,81.9){$\cdot$} \put(89,81.29){$\cdot$}
\put(89.5,80.65){$\cdot$} \put(90,80){$\cdot$}
\put(90.5,79.32){$\cdot$} \put(91,78.62){$\cdot$}
\put(91.5,77.89){$\cdot$} \put(92,77.13){$\cdot$}
\put(92.5,76.34){$\cdot$} \put(93,75.51){$\cdot$}
\put(93.5,74.65){$\cdot$} \put(94,73.74){$\cdot$}
\put(94.5,72.79){$\cdot$} \put(96.5,73.74){\vector(1,-2){1}}
% IV quadrant
\put(80,20){3} \put(97,31){$\fH^{l_0}_6$}\put(108,28){$p-q\equiv
6\!\!\!\!\pmod{8}$} \put(75,6.7){$\cdot$} \put(75.5,7){$\cdot$}
\put(76,7.29){$\cdot$} \put(76.5,7.6){$\cdot$}
\put(77,7.91){$\cdot$} \put(77.5,8.24){$\cdot$}
\put(78,8.57){$\cdot$} \put(78.5,8.91){$\cdot$}
\put(79,9.27){$\cdot$} \put(79.5,9.63){$\cdot$} \put(80,10){$\cdot$}
\put(80.5,10.38){$\cdot$} \put(81,10.77){$\cdot$}
\put(81.5,11.17){$\cdot$} \put(82,11.58){$\cdot$}
\put(82.5,12.00){$\cdot$} \put(83,12.44){$\cdot$}
\put(83.5,12.88){$\cdot$} \put(84,13.34){$\cdot$}
\put(84.5,13.8){$\cdot$} \put(85,14.29){$\cdot$}
\put(85.5,14.79){$\cdot$} \put(86,15.3){$\cdot$}
\put(86.5,15.82){$\cdot$} \put(87,16.37){$\cdot$}
\put(87.5,16.92){$\cdot$} \put(88,17.5){$\cdot$}
\put(88.5,18.09){$\cdot$} \put(89,18.71){$\cdot$}
\put(89.5,19.34){$\cdot$} \put(90,20){$\cdot$}
\put(90.5,20.68){$\cdot$} \put(91,21.38){$\cdot$}
\put(91.5,22.11){$\cdot$} \put(92,22.87){$\cdot$}
\put(92.5,23.66){$\cdot$} \put(93,24.48){$\cdot$}
\put(93.5,25.34){$\cdot$} \put(94,26.25){$\cdot$}
\put(94.5,27.20){$\cdot$} \put(95,28.20){$\cdot$}
% I quadrant
\put(20,80){7} \put(25,93.3){$\cdot$} \put(24.5,93){$\cdot$}
\put(24,92.7){$\cdot$} \put(23.5,92.49){$\cdot$}
\put(23,92.08){$\cdot$} \put(22.5,91.75){$\cdot$}
\put(22,91.42){$\cdot$} \put(21.5,91.08){$\cdot$}
\put(21,90.73){$\cdot$} \put(20.5,90.37){$\cdot$}
\put(20,90){$\cdot$} \put(19.5,89.62){$\cdot$}
\put(19,89.23){$\cdot$} \put(18.5,88.83){$\cdot$}
\put(18,88.42){$\cdot$} \put(17.5,87.99){$\cdot$}
\put(17,87.56){$\cdot$} \put(16.5,87.12){$\cdot$}
\put(16,86.66){$\cdot$} \put(15.5,86.19){$\cdot$}
\put(15,85.70){$\cdot$} \put(14.5,85.21){$\cdot$}
\put(14,84.69){$\cdot$} \put(13.5,84.17){$\cdot$}
\put(13,83.63){$\cdot$} \put(12.5,83.07){$\cdot$}
\put(12,82.49){$\cdot$} \put(11.5,81.9){$\cdot$}
\put(11,81.29){$\cdot$} \put(10.5,80.65){$\cdot$}
\put(10,80){$\cdot$} \put(9.5,79.32){$\cdot$} \put(9,78.62){$\cdot$}
\put(8.5,77.89){$\cdot$} \put(8,77.13){$\cdot$}
\put(7.5,76.34){$\cdot$} \put(7,75.51){$\cdot$}
\put(6.5,74.65){$\cdot$} \put(6,73.79){$\cdot$}
\put(5.5,72.79){$\cdot$} \put(5,71.79){$\cdot$}
% III quadrant
\put(20,20){5} \put(25,6.7){$\cdot$} \put(24.5,7){$\cdot$}
\put(24,7.29){$\cdot$} \put(23.5,7.6){$\cdot$}
\put(23,7.91){$\cdot$} \put(22.5,8.24){$\cdot$}
\put(22,8.57){$\cdot$} \put(21.5,8.91){$\cdot$}
\put(21,9.27){$\cdot$} \put(20.5,9.63){$\cdot$} \put(20,10){$\cdot$}
\put(19.5,10.38){$\cdot$} \put(19,10.77){$\cdot$}
\put(18.5,11.17){$\cdot$} \put(18,11.58){$\cdot$}
\put(17.5,12){$\cdot$} \put(17,12.44){$\cdot$}
\put(16.5,12.88){$\cdot$} \put(16,13.34){$\cdot$}
\put(15.5,13.8){$\cdot$} \put(15,14.29){$\cdot$}
\put(14.5,14.79){$\cdot$} \put(14,15.3){$\cdot$}
\put(13.5,15.82){$\cdot$} \put(13,16.37){$\cdot$}
\put(12.5,16.92){$\cdot$} \put(12,17.5){$\cdot$}
\put(11.5,18.09){$\cdot$} \put(11,18.71){$\cdot$}
\put(10.5,19.34){$\cdot$} \put(10,20){$\cdot$}
\put(9.5,20.68){$\cdot$} \put(9,21.38){$\cdot$}
\put(8.5,22.11){$\cdot$} \put(8,22.87){$\cdot$}
\put(7.5,23.66){$\cdot$} \put(7,24.48){$\cdot$}
\put(6.5,25.34){$\cdot$} \put(6,26.25){$\cdot$}
\put(5.5,27.20){$\cdot$} \put(5,28.20){$\cdot$}
%quadrants I-II
\put(-1,97){$\fR^{l_0}_{0,2}\cup\fR^{l_0}_{0,2}$}
\put(-55,107){$p-q\equiv 1\!\!\!\!\pmod{8}$} \put(50,93){0}
\put(50,100){$\cdot$} \put(49.5,99.99){$\cdot$}
\put(49,99.98){$\cdot$} \put(48.5,99.97){$\cdot$}
\put(48,99.96){$\cdot$} \put(47.5,99.94){$\cdot$}
\put(47,99.91){$\cdot$} \put(46.5,99.86){$\cdot$}
\put(46,99.84){$\cdot$} \put(45.5,99.8){$\cdot$}
\put(45,99.75){$\cdot$} \put(44.5,99.7){$\cdot$}
\put(44,99.64){$\cdot$} \put(43.5,99.57){$\cdot$}
\put(43,99.51){$\cdot$} \put(42.5,99.43){$\cdot$}
\put(42,99.35){$\cdot$} \put(41.5,99.27){$\cdot$}
\put(41,99.18){$\cdot$} \put(40.5,99.09){$\cdot$}
\put(40,98.99){$\cdot$} \put(39.5,98.88){$\cdot$}
\put(39,98.77){$\cdot$} \put(38.5,98.66){$\cdot$}
\put(38,98.54){$\cdot$} \put(37.5,98.41){$\cdot$}
\put(37,98.28){$\cdot$} \put(50.5,99.99){$\cdot$}
\put(51,99.98){$\cdot$} \put(51.5,99.97){$\cdot$}
\put(52,99.96){$\cdot$} \put(52.5,99.94){$\cdot$}
\put(53,99.91){$\cdot$} \put(53.5,99.86){$\cdot$}
\put(54,99.84){$\cdot$} \put(54.5,99.8){$\cdot$}
\put(55,99.75){$\cdot$} \put(55.5,99.7){$\cdot$}
\put(56,99.64){$\cdot$} \put(56.5,99.57){$\cdot$}
\put(57,99.51){$\cdot$} \put(57.5,99.43){$\cdot$}
\put(58,99.35){$\cdot$} \put(58.5,99.27){$\cdot$}
\put(59,99.18){$\cdot$} \put(59.5,99.09){$\cdot$}
\put(60,98.99){$\cdot$} \put(60.5,98.88){$\cdot$}
\put(61,98.77){$\cdot$} \put(61.5,98.66){$\cdot$}
\put(62,98.54){$\cdot$} \put(62.5,98.41){$\cdot$}
\put(63,98.28){$\cdot$}
\put(65,97){$\fR^{l_0}_0$}\put(73,108){$p-q\equiv
0\!\!\!\!\pmod{8}$}
%quadrants III-IV
\put(50,7){4} \put(67,2){$\fH^{l_0}_{4,6}\cup\fH^{l_0}_{4,6}$}
\put(90,-6){$p-q\equiv 5\!\!\!\!\pmod{8}$} \put(50,0){$\cdot$}
\put(50.5,0){$\cdot$} \put(51,0.01){$\cdot$}
\put(51.5,0.02){$\cdot$} \put(52,0.04){$\cdot$}
\put(52.5,0.06){$\cdot$} \put(53,0.09){$\cdot$}
\put(53.5,0.12){$\cdot$} \put(54,0.16){$\cdot$}
\put(54.5,0.2){$\cdot$} \put(55,0.25){$\cdot$}
\put(55.5,0.3){$\cdot$} \put(56,0.36){$\cdot$}
\put(56.5,0.42){$\cdot$} \put(57,0.49){$\cdot$}
\put(57.5,0.56){$\cdot$} \put(58,0.64){$\cdot$}
\put(58.5,0.73){$\cdot$} \put(59,0.82){$\cdot$}
\put(59.5,0.91){$\cdot$} \put(60,1.01){$\cdot$}
\put(60.5,1.11){$\cdot$} \put(61,1.22){$\cdot$}
\put(61.5,1.34){$\cdot$} \put(62,1.46){$\cdot$}
\put(62.5,1.59){$\cdot$} \put(63,1.72){$\cdot$}
\put(49.5,0){$\cdot$} \put(49,0.01){$\cdot$}
\put(48.5,0.02){$\cdot$} \put(48,0.04){$\cdot$}
\put(47.5,0.06){$\cdot$} \put(47,0.09){$\cdot$}
\put(46.5,0.12){$\cdot$} \put(46,0.16){$\cdot$}
\put(45.5,0.2){$\cdot$} \put(45,0.25){$\cdot$}
\put(44.5,0.3){$\cdot$} \put(44,0.36){$\cdot$}
\put(43.5,0.42){$\cdot$} \put(43,0.49){$\cdot$}
\put(42.5,0.56){$\cdot$} \put(42,0.64){$\cdot$}
\put(41.5,0.73){$\cdot$} \put(41,0.82){$\cdot$}
\put(40.5,0.91){$\cdot$} \put(40,1.01){$\cdot$}
\put(39.5,1.11){$\cdot$} \put(39,1.22){$\cdot$}
\put(38.5,1.34){$\cdot$} \put(38,1.46){$\cdot$}
\put(37.5,1.59){$\cdot$} \put(37,1.72){$\cdot$}
\put(28,3){$\fH^{l_0}_4$}\put(-40,-4){$p-q\equiv 4\!\!\!\!\pmod{8}$}
%quadrants II-IV
\put(93,50){2} \put(98.28,63){$\cdot$} \put(98.41,62.5){$\cdot$}
\put(98.54,62){$\cdot$} \put(98.66,61.5){$\cdot$}
\put(98.77,61){$\cdot$} \put(98.88,60.5){$\cdot$}
\put(98.99,60){$\cdot$} \put(99.09,59.5){$\cdot$}
\put(99.18,59){$\cdot$} \put(99.27,58.5){$\cdot$}
\put(99.35,58){$\cdot$} \put(99.43,57.5){$\cdot$}
\put(99.51,57){$\cdot$} \put(99.57,56.5){$\cdot$}
\put(99.64,56){$\cdot$} \put(99.7,55.5){$\cdot$}
\put(99.75,55){$\cdot$} \put(99.8,54.5){$\cdot$}
\put(99.84,54){$\cdot$} \put(99.86,53.5){$\cdot$}
\put(99.91,53){$\cdot$} \put(99.94,52.5){$\cdot$}
\put(99.96,52){$\cdot$} \put(99.97,51.5){$\cdot$}
\put(99.98,51){$\cdot$} \put(99.99,50.5){$\cdot$}
\put(100,50){$\cdot$} \put(98.28,37){$\cdot$}
\put(98.41,37.5){$\cdot$} \put(98.54,38){$\cdot$}
\put(98.66,38.5){$\cdot$} \put(98.77,39){$\cdot$}
\put(98.88,39.5){$\cdot$} \put(98.99,40){$\cdot$}
\put(99.09,40.5){$\cdot$} \put(99.18,41){$\cdot$}
\put(99.27,41.5){$\cdot$} \put(99.35,42){$\cdot$}
\put(99.43,42.5){$\cdot$} \put(99.51,43){$\cdot$}
\put(99.57,43.5){$\cdot$} \put(99.64,44){$\cdot$}
\put(99.7,44.5){$\cdot$} \put(99.75,45){$\cdot$}
\put(99.8,45.5){$\cdot$} \put(99.84,46){$\cdot$}
\put(99.86,46.5){$\cdot$} \put(99.91,47){$\cdot$}
\put(99.94,47.5){$\cdot$} \put(99.96,48){$\cdot$}
\put(99.97,48.5){$\cdot$} \put(99.98,49){$\cdot$}
\put(99.99,49.5){$\cdot$}
%quadrants III-I
\put(7,50){6} \put(1,32){$\fC^{l_0}_3$}\put(-65,29){$p-q\equiv
3\!\!\!\!\pmod{8}$} \put(1.72,63){$\cdot$} \put(1.59,62.5){$\cdot$}
\put(1.46,62){$\cdot$} \put(1.34,61.5){$\cdot$}
\put(1.22,61){$\cdot$} \put(1.11,60.5){$\cdot$}
\put(1.01,60){$\cdot$} \put(0.99,59.5){$\cdot$}
\put(0.82,59){$\cdot$} \put(0.73,58.5){$\cdot$}
\put(0.64,58){$\cdot$} \put(0.56,57.5){$\cdot$}
\put(0.49,57){$\cdot$} \put(0.42,56.5){$\cdot$}
\put(0.36,56){$\cdot$} \put(0.3,55.5){$\cdot$}
\put(0.25,55){$\cdot$} \put(0.2,54.5){$\cdot$}
\put(0.16,54){$\cdot$} \put(0.12,53.5){$\cdot$}
\put(0.09,53){$\cdot$} \put(0.06,52.5){$\cdot$}
\put(0.04,52){$\cdot$} \put(0.02,51.5){$\cdot$}
\put(0.01,51){$\cdot$} \put(0,50.5){$\cdot$} \put(0,50){$\cdot$}
\put(1.72,37){$\cdot$} \put(1.59,37.5){$\cdot$}
\put(1.46,38){$\cdot$} \put(1.34,38.5){$\cdot$}
\put(1.22,39){$\cdot$} \put(1.11,39.5){$\cdot$}
\put(1.01,40){$\cdot$} \put(0.99,40.5){$\cdot$}
\put(0.82,41){$\cdot$} \put(0.73,41.5){$\cdot$}
\put(0.64,42){$\cdot$} \put(0.56,42.5){$\cdot$}
\put(0.49,43){$\cdot$} \put(0.42,43.5){$\cdot$}
\put(0.36,44){$\cdot$} \put(0.3,44.5){$\cdot$}
\put(0.25,45){$\cdot$} \put(0.2,45.5){$\cdot$}
\put(0.16,46){$\cdot$} \put(0.12,46.5){$\cdot$}
\put(0.09,47){$\cdot$} \put(0.06,47.5){$\cdot$}
\put(0.04,48){$\cdot$} \put(0.02,48.5){$\cdot$}
\put(0.01,49){$\cdot$} \put(0,49.5){$\cdot$}
\put(0.5,67){$\fR^{l_0}_2$}\put(-65,75){$p-q\equiv
2\!\!\!\!\pmod{8}$}
\end{picture}
\]

\vspace{2ex}
\begin{center}
\begin{minipage}{25pc}{\small
{\bf Fig.\,6:} An action of the supergroup
$BW_{\R}=G(\cl_{p,q},\gamma,\bcirc)$ on the system
$\fM=\fM^+\oplus\fM^-$ of real representations $\fD$ of
$\fG_+\simeq\spin_+(1,3)$, $l_0=\frac{p+q}{4}$.}
\end{minipage}
\end{center}
\end{figure}
%\medskip
Let us consider in detail one action cycle of the supergroup
$BW_{\R}$ on the system $\fM=\fM^+\oplus\fM^-$. In virtue of an
isomorphism $\cl^+_{1,1}\simeq\cl_{1,0}$ a transition
$\cl^+_{1,1}\rightarrow\cl_{1,1}$ ($\cl_{1,0}\rightarrow\cl_{1,1}$)
induces on the system $\fM$ a transition
$\fR^0_0\cup\fR^0_0\rightarrow\fR^{\frac{1}{2}}_0$, where
$\cl_{1,0}\simeq\cl_{0,0}\cup\cl_{0,0}$ is an algebra of double
numbers, $\fR^0_0$ is the field of real numbers,
$\fR^{\frac{1}{2}}_0$ is a real representation of the spin $1/2$,
associated with the algebra $\cl_{1,1}$. At this point, $h=0$ and
$r=0$ (an original point of the first cycle). Further, in virtue of
$\cl^+_{1,2}\simeq\cl_{1,1}$ a transition
$\cl^+_{1,2}\rightarrow\cl_{1,2}$ ($\cl_{1,1}\rightarrow\cl_{1,2}$)
induces on the system $\fM$ a transition
$\fR^{\frac{1}{2}}_0\rightarrow\fR^{\frac{1}{2}}_0\cup\fR^{\frac{1}{2}}_0$,
at this point, $h=1$ and $r=0$. The following transition
$\cl^+_{1,3}\rightarrow\cl_{1,3}$ ($\cl_{1,2}\rightarrow\cl_{1,3}$)
in virtue of $\cl^+_{1,3}\simeq\cl_{1,2}$ leads to a transition
$\fR^{\frac{1}{2}}_0\cup\fR^{\frac{1}{2}}_0\rightarrow\fH^1_6$,
where $\fH^1_6$ is a quaternionic representation of the spin 1,
associated with the algebra $\cl_{1,3}$. At this transition we have
$h=2$ and $r=0$. In virtue of an isomorphism
$\cl^+_{1,4}\simeq\cl_{1,3}$ a transition
$\cl^+_{1,4}\rightarrow\cl_{1,4}$ ($\cl_{1,3}\rightarrow\cl_{1,4}$)
induces on the system $\fM$ a transition
$\fH^1_6\rightarrow\fH^1_6\cup\fH^1_6$, at this point, $h=3$ and
$r=0$. Further, in virtue of $\cl^+_{1,5}\simeq\cl_{1,4}$ a
transition $\cl^+_{1,5}\rightarrow\cl_{1,5}$
($\cl_{1,4}\rightarrow\cl_{1,5}$) induces on the system $\fM$ a
transition $\fH^1_6\cup\fH^1_6\rightarrow\fH^{\frac{3}{2}}_4$, where
$\fH^{\frac{3}{2}}_4$ is a quaternionic representation of the spin
$3/2$, associated with the algebra $\cl_{1,5}$. At this transition
we have $h=4$ and $r=0$. The following transition
$\cl^+_{1,6}\rightarrow\cl_{1,6}$ ($\cl_{1,5}\rightarrow\cl_{1,6}$)
in virtue of $\cl^+_{1,6}\simeq\cl_{1,5}$ leads to a transition
$\fH^{\frac{3}{2}}_4\rightarrow\fH^{\frac{3}{2}}_4\cup\fH^{\frac{3}{2}}_4$,
at this point, $h=5$ and $r=0$. The transition
$\cl^+_{1,7}\rightarrow\cl_{1,7}$ ($\cl_{1,6}\rightarrow\cl_{1,7}$)
in virtue of $\cl^+_{1,7}\simeq\cl_{1,6}$ induces on the system
$\fM$ a transition
$\fH^{\frac{3}{2}}_4\cup\fH^{\frac{3}{2}}_4\rightarrow\fR^2_2$,
where $\fR^2_2$ is a real representation of the spin 2, associated
with the algebra $\cl_{1,7}$. At this transition we have $h=6$ and
$r=0$. Finally, a transition $\cl^+_{1,8}\rightarrow\cl_{1,8}$
($\cl_{1,7}\rightarrow\cl_{1,8}$) finishes the first cycle ($h=7$,
$r=0$) and in virtue of $\cl^+_{1,8}\simeq\cl_{1,7}$ induces on the
system $\fM$ the following transition:
$\fR^2_2\rightarrow\fR^2_2\cup\fR^2_2$. It is obvious that a new
cycle ($h=0$, $r=1$) is started with a transition
$\cl^+_{1,9}\rightarrow\cl_{1,9}$ ($\cl_{1,8}\rightarrow\cl_{1,9}$)
and so on.

Further, a cyclic structure over the algebras $\cl_{p,q}$, defined
by the supergroup $BW_{\R}$, is related immediately with the
Atiyah-Bott-Shapiro periodicity \cite{AtBSh}. In accordance with
\cite{AtBSh} the Clifford algebra over the field $\F=\R$ is modulo 8
periodic:
$\cl_{p+8,q}\simeq\cl_{p,q}\otimes\cl_{8,0}\,(\cl_{p,q+8}\simeq\cl_{p,q}
\otimes\cl_{0,8})$. Therefore, the following relation takes place:
\[
\fD^{l_0+2}\simeq\fD^{l_0}\otimes\fR^2_0,
\]
since $\fR^2_0\leftrightarrow\cl_{8,0}\;(\cl_{0,8})$. In virtue of
the theorem 2 from (\ref{Ten}) it follows that
$\cl_{8,0}\simeq\cl_{2,0}\otimes\cl_{0,2}\otimes\cl_{0,2}\otimes\cl_{2,0}$
($\cl_{0,8}\simeq\cl_{0,2}\otimes\cl_{2,0}\otimes\cl_{2,0}
\otimes\cl_{0,2}$). A minimal left ideal of the algebra $\cl_{8,0}$
is the spinspace $\dS_{16}$, which in virtue of the real ring
$\K\simeq\R$ is defined within a full matrix algebra
$\Mat_{16}(\R)$. At first glance, from the factorization of the
algebra $\cl_{8,0}$ it follows that
$\Mat_2(\R)\otimes\BH\otimes\BH\otimes\Mat_2(\R)\not\simeq\Mat_{16}(\R)$,
but it is wrong, since there exists an isomorphism
$\BH\otimes\BH\simeq\Mat_4(\R)$ (see Appendix B in \cite{BDGK}). On
the other hand, in terms of minimal left ideals the modulo 8
periodicity is defined as follows:
\[
\dS_{n+8}\simeq\dS_n\otimes\dS_{16}.
\]
In virtue of the mapping
$\gamma_{8,0}:\,\cl_{8,0}\rightarrow\Mat_2(\dO)$ \cite{MS96,Bae01},
the latter relation can be written in the form
\[
\dS_{n+8}\simeq\dS_n\otimes\dO^2,
\]
where $\dO$ is {\it an octonion algebra}. Because the algebra
$\cl_{8,0}\simeq \cl_{0,8}$ admits an octonionic representation,
then in virtue of the modulo 8 periodicity octonionic
representations can be defined for all high dimensions and,
therefore, on the system $\fM=\fM^+\oplus\fM^-$ of real
representations of the group $\fG_+$ we have a relation
\[
\fD^{l_0+2}\simeq\fD^{l_0}\otimes\fO,
\]
where $\fO$ is {\it an octonionic representation} of the group
$\fG_+=\spin_+(1,3)$ ($\fO\sim\fR^0_4$).

\section*{Acknowledgements}
I am deeply grateful to Prof. C. T. C. Wall  and Prof. A. Trautman
for sending me their interesting works.


\begin{thebibliography}{00}
\bibitem{3} W. K. Clifford, {\rm Applications of Grassmann's
extensive algebra}, Amer. J. Math. {\bf 1} (1878) 350--358.
%
\bibitem{Ham} W. R. Hamilton, {\rm Lectures on Quaternions}, Hodges and Smith, Dublin, 1853.
%
\bibitem{Grass} H. Grassmann, {\rm Die Ausdehnungslehre}, Enslin, Berlin, 1862.
%
\bibitem{Cliff2} W. K. Clifford, {\rm On the classification of
geometric algebras}, in R. Tucker, ed., {\rm Mathematical Papers by
William Kingdon Clifford}, Macmillan, London, 1982, pp. 397-401.
%
\bibitem{Lips} R. Lipschitz, {\rm Untersuchungen \"{u}ber die Summen
von Quadraten}, Max Cohen und Sohn, Bonn, 1886.
%
\bibitem{Car08} E. Cartan, {\rm Nombres complexes}, in J. Molk, ed.
: Encyclop\'{e}die des sciences math\'{e}matiques, Tome I, Vol.1,
Fasc 4, art. 15, (1908) pp. 329-468.
%
\bibitem{Wit37} E. Witt, {\rm Theorie der quadratischen Formen in
beliebigen K\"{o}rpern}, J. Reine Angew. Math. {\bf 176} (1937)
31--44.
%
\bibitem{Che54} C. Chevalley, {\rm The Algebraic Theory of Spinors},
Columbia University Press, New York, 1954.
%
\bibitem{Cru91} A. Crumeyrolle, {\rm Orthogonal and Symplectic
Clifford Algebras, Spinor Structures}, Kluwer Acad. Publ.,
Dordrecht, 1991.
%
\bibitem{Port95} I. R. Porteous, {\rm Clifford Algebras and Classical
Groups}, Cambridge University Press, Cambridge, 1995.
%
\bibitem{Lou97} P. Lounesto, {\rm Clifford Algebras and Spinors},
Cambridge University Press, Cambridge, 1997.
%
\bibitem{Boc46} S. Bochner, {\rm Formal Lie Groups}, Ann. Math. {\bf
47} (1946) 192--212.
%
\bibitem{BK70} F. A. Berezin, G. I. Kac, {\rm Lie groups with commuting and
anticommuting parameters}, Mat. Zbornik {\bf 82} (1970) 343--359.
%
\bibitem{AtBSh} M. F. Atiyah, R. Bott, A. Shapiro, {\rm Clifford
modules}, Topology {\bf 3}, Suppl. 1, (1964) 3--38.
%
\bibitem{BTr87} P. Budinich, A. Trautman, {\rm An introduction to the
spinorial chessboard}, J. Geom. Phys. {\bf 4} (1987) 363--390.
%
\bibitem{BT88} P. Budinich, A. Trautman, {\rm The Spinorial
Chessboard}, Springer, Berlin, 1988.
%
\bibitem{Wal64} C. T. C. Wall, {\rm Graded Brauer Groups}, J. Reine Angew. Math. {\bf 213} (1964) 187--199.
%
\bibitem{Lou81} P. Lounesto, {\rm Scalar Products of Spinors and an
Extension of Brauer-Wall Groups}, Found. Phys. {\bf 11} (1981)
721--740.
%
\bibitem{GS09} D.\,M. Gitman, A.\,L. Shelepin, \textrm{Field on the Poincar\'{e} group and
quantum description of orientable objects}, Eur. Phys. J. C.
\textbf{61} (2009) 111--139; arXiv:0901.2537 [hep-th] (2009).
%
\bibitem{GS10} D.\,M. Gitman, A.\,L. Shelepin, \textrm{Classification of quantum relativistic orientable
objects}, Phys. Scr. \textbf{83} (2011) 015103; arXiv: 1001.5290
[hep-th] (2010).
%
\bibitem{Che55} C. Chevalley, {\rm The construction and study of certain
important algebras}, Publications of Mathematical Society of Japan
No 1, Herald Printing, Tokyo, 1955.
%
\bibitem{Kar79} M. Karoubi, {\rm K-Theory. An Introduction}, Springer-Verlag,
Berlin, 1979.
%
\bibitem{Rash} P. K. Rashevskii, {\rm The Theory of Spinors}, (in Russian)
Uspekhi Mat. Nauk {\bf 10} (1955), 3--110; English translation in
Amer. Math. Soc. Transl. (Ser.~2) {\bf 6} (1957) 1.
%
\bibitem{Port} I.\,R. Porteous, {\rm Topological Geometry},
van Nostrand, London, 1969.
%
\bibitem{GMS} I.\, M. Gel'fand, R.\,A. Minlos, Z.\,Ya. Shapiro, Representations
of the Rotation and Lorentz Groups and their Applications, Pergamon
Press, Oxford, 1963.
%
\bibitem{Wa32} B.\,L. van der Waerden, Die Gruppentheoretische Methode in der
Quantenmechanik, Springer, Berlin, 1932.
%
\bibitem{GY48} I.\,M. Gel'fand, A.\,M. Yaglom, {\rm General
relativistic--invariant equations and infinite--dimensional
representations of the Lorentz group}, Zh. Ehksp. Teor. Fiz. {\bf
18} (1948) 703--733.
%
\bibitem{Var03} V.\,V. Varlamov, \textrm{General Solutions of Relativistic
Wave Equations}, Int. J. Theor. Phys. \textbf{42} (2003) 583--633;
arXiv:math-ph/0209036 (2002).
%
\bibitem{Var04e} V.\,V. Varlamov, \textrm{Relativistic wavefunctions on the Poincare
group}, J. Phys. A: Math. Gen. \textbf{37} (2004) 5467--5476;
arXiv:math-ph/0308038 (2003).
%
\bibitem{Var05b} V.\,V. Varlamov, \textrm{Maxwell field on the Poincar\'{e}
group}, Int. J. Mod. Phys. A. {\bf 20} (2005) 4095--4112;
arXiv:math-ph/0310051 (2003).
%
\bibitem{FP39} M. Fierz, W. Pauli, {\rm On Relativistic Wave Equations of
Particles of Arbitrary Spin in an Electromagnetic Field}. Proc. Roy.
Soc. (London) A. {\bf 173} (1939) 211--232.
%
\bibitem{Var07b} V.\,V. Varlamov, \textrm{General Solutions of Relativistic
Wave Equations II: Arbitrary Spin Chains}, Int. J. Theor. Phys.
\textbf{46} (2007) 741--805; arXiv:math-ph/0503058 (2005).
%
\bibitem{Nai58} M.\,A. Naimark,
Linear Representations of the Lorentz Group, Pergamon, London, 1964.
%
\bibitem{RF} Yu.\,B. Rumer, A.\,I. Fet, Group Theory and Quantized
Fields, Nauka, Moscow, 1977 [in Russian].
%
\bibitem{Bro43} L. de Broglie, {\rm Theorie Generale des Particules a Spin (Methode
de Fusion)}, Gauthier-Villars, Paris, 1943.
%
\bibitem{Wig64} E.\,P. Wigner, {\rm Unitary Representations of the
Inhomogeneous Lorentz Group Including Reflections} in {\rm Group
Theoretical Concepts and Methods in Elementary Particle Physics},
Ed. F. G\"{u}rsey, Gordon \& Breach, New York, 1964.
%
\bibitem{GN64} M. Gell-Mann, Y. Ne'eman, {\rm The Eightfold Way},
Benjamin, New York, 1964.
%
\bibitem{BDGK} M. Berg, C. DeWitt-Morette, S. Gwo, E. Kramer, {\rm The Pin
Groups in Physics: $C$, $P$, and $T$}, Rev. Math. Phys. \textbf{13}
(2001) 953--1034; arXiv:math-ph/0012006 (2000).
%
\bibitem{MS96} C.\,A. Manogue, J. Schray, {\rm Octonionic representations of
Clifford algebras and triality}, Found. Phys. {\bf 26} (1996)
17--70; arXiv:hep-th/9407179 (1994).
%
\bibitem{Bae01} J.\,C. Baez, {\rm The Octonions}, Bull. Am. Math. Soc. {\bf 39} (2002) 145--205;
arXiv:math.RA/0105155 (2001).
%

\end{thebibliography}
\end{document}